\newcommand{\qg}{{\bf g}}
\newcommand{\qh}{{\bf h}}
\newcommand{\qn}{{\bf n}}
\newcommand{\qr}{{\bf r}}
\newcommand{\qt}{{\bf t}}
\newcommand{\qv}{{\bf v}}
\newcommand{\qw}{{\bf w}}
\newcommand{\qx}{{\bf x}}
\newcommand{\qA}{{\bf A}}
\newcommand{\qI}{{\bf I}}
\newcommand{\qX}{{\bf X}}
\newcommand{\qzero}{{\bf 0}}
\newcommand{\be}{\begin{equation}} \newcommand{\ee}{\end{equation}}
\newcommand{\bea}{\begin{eqnarray}} \newcommand{\eea}{\end{eqnarray}}
\newtheorem{lemma}{Lemma}
\newtheorem{proposition}{Proposition}
\def\bl#1{\color{black}{#1}}
\def\rl#1{\color{black}{#1}}
\DeclareMathOperator{\g}{\mathbf{g}}
\DeclareMathOperator{\h}{\mathbf{h}}
\DeclareMathOperator{\bv}{\mathbf{v}}
\DeclareMathOperator{\w}{\mathbf{w}}
\DeclareMathOperator{\D}{\mathbf{D}}
\DeclareMathOperator{\0}{\mathbf{0}}
\DeclareMathOperator{\re}{Re}
\DeclareMathOperator{\diag}{diag}
\DeclareMathOperator{\tran}{{\scriptstyle T}}
\DeclareMathOperator{\st}{{s.t.}}
\begin{document}

\title{Information and Energy Cooperation in  Cognitive Radio Networks}
\markboth{\textit{A Manuscript Accepted in The IEEE Transactions on Signal Processing } }{} 
 \author{   Gan Zheng, {\it Senior Member, IEEE},  Zuleita   Ho, {\it Member, IEEE}, Eduard A. Jorswieck, {\it Senior Member, IEEE},
 and Bj$\ddot{\rm o}$rn Ottersten, {\it Fellow, IEEE}
 \thanks{
 Gan Zheng is with School of Computer Science and Electronic Engineering, University of Essex, UK, E-mail: {\sf ganzheng@essex.ac.uk.}
 He is also affiliated with Interdisciplinary Centre for Security, Reliability and Trust (SnT),
  University of Luxembourg, Luxembourg.}
  \thanks{Zuleita   Ho was with Communications Laboratory, Dresden University of Technology, Dresden,
  Germany. She is now a senior engineer in the Advanced Communications Laboratory, DMC R\&D Center, Samsung Electronics, Korea, E-mail:{\sf
  zuleita.ho@samsung.com}.}
\thanks{Eduard A. Jorswieck is with Communications Laboratory, Dresden University of Technology, Dresden, Germany, E-mail:{\sf  eduard.jorswieck@tu-dresden.de}. This work has been
performed in the framework of the European research projects DIWINE
and ACROPOLIS, which are partly funded by the European Union under
its FP7 ICT Objective 1.1 - The Network of the Future.}
 \thanks{Bj$\ddot{\rm o}$rn Ottersten is with the
Interdisciplinary Centre for Security, Reliability and Trust (SnT),
  University of Luxembourg, 
Luxembourg, E-mail: {\sf  bjorn.ottersten@uni.lu.}}
 \thanks{Copyright (c) 2013 IEEE. Personal use of this material is permitted. However, permission to use this material for any other purposes
 must be obtained from the IEEE by sending a request to pubs-permissions@ieee.org.}
 }

 \maketitle
\date{\today}
\maketitle
\begin{abstract}
 Cooperation between the primary and secondary systems can   improve the spectrum efficiency
in cognitive radio networks. The key
 idea is that the secondary system helps to boost the primary system's performance by relaying
and in return the primary system provides more opportunities
 for the secondary system to access the spectrum. In contrast to
most of existing works that only consider information cooperation,
 this paper studies  joint information and energy cooperation between the two systems,
i.e., the primary transmitter sends information for relaying and
 feeds the secondary system with energy as well. This is particularly useful when the secondary
transmitter has good channel quality to the primary receiver but
  is energy constrained. We propose and study three schemes that enable this cooperation.
Firstly, we assume there exists an ideal backhaul  between the two
  systems for information and energy transfer. We then consider {  two} wireless information and
energy transfer {  schemes} from the primary transmitter to the
secondary
  transmitter using power splitting and time splitting energy harvesting techniques, {  respectively}.
For each scheme, the optimal and  zero-forcing solutions are derived.
  Simulation results demonstrate promising performance gain for both systems due to the
additional energy cooperation. {   It is also revealed that the
power splitting scheme can achieve larger rate region than the time
splitting scheme when the efficiency of the energy transfer is
sufficiently large.}
 \end{abstract}

\begin{keywords}
  Cognitive radio, cognitive relaying,     information and energy cooperation, energy harvesting, wireless energy transfer.
  \end{keywords}

 \section{Introduction}
 \subsection{Motivation}
  Cooperative cognitive radio networks (CCRN)    have been a new paradigm to improve the spectrum efficiency of a cognitive radio (CR) system
 where  the primary and secondary systems actively seek opportunities to cooperate with each other.
 CCRN  have many advantages over existing non-cooperative CR schemes.
 It is a win-win strategy for both systems in the sense that the
secondary transmitter (ST) helps relay the traffic from the primary transmitter (PT) to the primary user (PU), and in return can utilize the primary
spectrum to serve its own secondary user (SU). This is especially preferred by the primary system when the PU's quality-of-service (QoS) cannot be
met by the primary system itself. Compared to the conventional interweave CR technique \cite{Mitola-2000} which is an opportunistic access scheme,
the cooperation scheme does not require the ST to wait and sense the spectrum holes for transmission; unlike the underlay technique
\cite{Haykin-2005} which sets limit on the interference to the primary system, the cooperation scheme focuses on the end performance, e.g., the PU
rate or the signal to interference plus noise ratio (SINR),  thus the ST is no longer restricted to transmit with low power.

 However, most existing CCRN  assume that the cooperation is only at the information level. One problem is that even when the ST has good channel
 quality to help serve the PU but is energy constrained, the cooperation is still not possible. This is a commonly seen situation when the ST is a
 low-power relay node rather than  a powerful base station (BS). This motivates us to propose the cooperation between the primary and
 secondary systems at both information and energy levels, i.e., the PT will transmit both information and energy to the ST, in exchange for the ST to relay
 the primary information. Compared to the existing CCRN with only information cooperation, this scheme creates even stronger incentives for
 both systems to cooperate and  substantially  improves the system overall spectrum efficiency. It can be seen as an enhanced win-win strategy.
 The energy cooperation can be enabled by the recently proposed energy harvesting or wireless energy transfer techniques
\cite{Zhang_11}. In the sequel, we will
 briefly review the literature about CCRN and energy cooperation.

 \subsection{Related Works}
 \subsubsection{CCRN}
 Early works about CCRN are mainly  from the viewpoint of  information theory \cite{Ephremides-07}-\cite{Viswanath-bound}
 assuming non-causal  primary message available at the ST, where   the ST employs dirty paper coding (DPC) to remove interference from the SU due to the
primary signal.   Using multiple antennas and non-causal primary message at the ST, the optimal beamforming is studied for both cases using DPC
\cite{Jorswieck-WCNC-12} and linear precoding \cite{Jorswieck-WSA-11}. However, these schemes require  non-causal primary information at the ST;
therefore they are hard to implement in practice and only provide  an outer bound on the achievable primary-secondary rate region.

As to more practical CCRN,   three-phase cooperation protocols
between primary and cognitive systems are proposed to exploit
primary resources in time and frequency domain
\cite{Spectrum-leasing-Simeone}, \cite{Spectrum-leasing-Su}. The ST
uses the first two phases to listen and forward the primary traffic;
in return, the last phase is exclusively reserved for the ST to
transmit its own signal to the SU. The use of multiple antennas and
beamforming at the ST provides additional degree of freedom for the
concurrent primary-cognitive cooperation. The zero-forcing (ZF)
beamforming technique and the optimal beamforming solution have been
studied in \cite{Manna-2011}\cite{CCRN-Song} and
\cite{Zheng_CCRN_13}, respectively. Different from the
single-antenna case, the ST with multiple antennas requires only two
phases: Phase I is the same as that in the single-antenna case while
in Phase II, the ST can both relay the primary signal and transmit
its own signal due to its ability of signal separation  in the
spatial domain. Recently full-duplex radio has been investigated in
\cite{Zheng_CCRN_FD_13} for CCRN which requires only one phase and
it can efficiently enlarge the achievable rate region. { Both the
uncoordinated underlay cognitive radio scenario and the coordinated
overlay cognitive radio scenario that consists of a message-learning
phase followed by a communication phase are studied in
\cite{Jorswieck-JWCN-13}.}

\subsubsection{Energy Cooperation}
Energy cooperation is a promising solution to prolong the network lifetime despite the possible loss during the process of energy transfer. In case
of power line systems, joint communication and energy cooperation is investigated in \cite{Zhang_Globecom} for the coordinated multi-point  downlink
cellular networks. The base stations (BSs) powered  by renewable energy  are connected by a power line to enable simultaneous data and energy
sharing. The proposed joint communication and energy cooperation solution are shown to substantially improve the downlink throughput for energy
harvesting (EH) systems, as compared to the case without    energy cooperation. In a similar scenario, the optimal energy cooperation algorithms
are designed in \cite{Zhang_WCNC} for both cases where the renewable energy profile and energy demand profile are deterministic and stochastic.

As to wireless energy transfer,  recently  the radio frequency (RF)
EH technology has emerged as a new solution where the
electromagnetic radiation in the environment is captured by the
receiver antennas and converted into useful energy. {  Thanks to
recent advances in antenna and rectenna circuit design, there has
been great progress towards improving the efficiency of wireless
energy transfer, for instance,    Powerharvester receivers provided
by   Powercast   can achieve conversion efficiency   as high as 70\%
in some scenarios \cite{powercase}. A sensor node powered by a
cellular Base Transceiver Station (BTS) at a distance of 200m from
the BTS was implemented in \cite{bts}}. RF-EH technique also enables
simultaneous transfer of information and energy using RF signals
\cite{Varshney-08}\cite{Grover-10}. Two practical receiver
structures to decode information and EH called ``time switching''
and ``power splitting'', are proposed in \cite{Zhang_11}. ``Power
splitting'' divides the received signal into two parts, one for
harvesting energy and the other for information decoding. ``Time
switching''  uses   dedicated time slots for harvesting energy and
the rest for data transmission.  Dynamic switching between
information decoding and RF EH is proposed in \cite{Zhang-12} then
further studied in \cite{Krikidis-12} for a cooperative relaying
scenario with a discrete-level battery at the RF-EH relay node.

When the wireless terminals have RF EH capabilities, energy cooperation provides more performance gain in additional  to the usual information
cooperation in cooperative communications.  Energy cooperation is considered in   \cite{energy_coop} for several basic multi-user network structures
 including relay channel, two-way channel and multiple access channel. The optimal energy management policies that maximize the system throughput
within a given duration  is studied.  A more relevant one to our
work is \cite{Zhou-13}, where an energy constrained relay node
harvests energy from the received RF signal and uses that harvested
energy to forward the source information to the destination,
therefore the relay does not need external power supply. Both time
switching and power switching relaying protocols  are proposed to
enable EH and information processing at the relay.  Outage capacity
and ergodic capacity are also derived.


 \subsection{Contribution}

 In this paper, we propose   two-level cooperation between the primary and the secondary systems  to achieve better use of the spectrum. The first one is information cooperation, where the PT broadcasts the primary signal and after receiving it, the
 ST retransmits it to the PU; the second one is energy cooperation, where the PT transmits power to the ST via either cable or wireless medium,
 such that the ST can obtain extra power to help the PT, as well as serve its own SU. The ST may have good channel condition to the PU, but lacks spectrum and energy, therefore,
 the two-level cooperation substantially increases the chance that the ST can assist the primary transmission and use the primary spectrum.
 We assume  the ST is  equipped with multiple antennas, and deals with the beamforming design to characterize the achievable primary-secondary rate region.
 In particular, we study the problem of maximizing the SU rate subject to the PU rate and ST power (including harvested power) constraints by    optimizing the beamforming design  at the ST for EH and relay processing.

 We propose three schemes that enable  information as well as energy  cooperation: i) ideal cooperation where we assume the primary information is
 non-causally known at the ST 
 and the transmit power can be   shared between the PT and the ST;  ii) power splitting scheme where the ST uses part of received signal for information
 decoding and the rest for energy harvesting; and iii) time
 splitting scheme where   a fraction of time is reserved for wireless energy transfer from the PT  to the ST and the rest of time is used for
 information listening and forwarding. For each scheme, we propose efficient algorithms to optimally solve the above mentioned optimization problem. In
 addition, we derive low-complexity solutions based on ZF criterion,  {  which provide some insights on the impacts of system parameters}.

 \subsection{Notations}
Throughout this paper, the following notations will be adopted. Vectors and matrices are represented by boldface lowercase and uppercase letters,
respectively.    $\|\cdot\|$ denotes the Frobenius norm. $(\cdot)^\dag$ denotes the Hermitian operation of a vector or matrix.   $\qA\succeq \qzero$
means that $\qA$ is positive semi-definite. $\qI$ denotes an identity matrix of appropriate dimension. ${\tt E}[\cdot]$ denotes the expectation.
  ${\bf x}\sim\mathcal{CN}({\bf m},{\bf\Theta})$ denotes a vector $\qx$ of complex Gaussian elements with a mean vector of ${\bf m}$ and a
covariance matrix of ${\bf\Theta}$.  $\Pi_{\qX}$ denotes the orthogonal projection onto the column space of  $\qX$ while $\Pi_{\qX}^\bot$ denotes the
orthogonal projection onto the orthogonal complement of the column space of $\qX$. We further define $[x]_0^1 \triangleq \min(1,\max(0,x))$.
$\diag(\bv)$ denotes a diagonal matrix with diagonal elements as the elements of $\bv$.

\section{General System Setting}
\begin{figure}[h]
  \vspace{-5mm}
  \centering
  \includegraphics[width=3.5in]{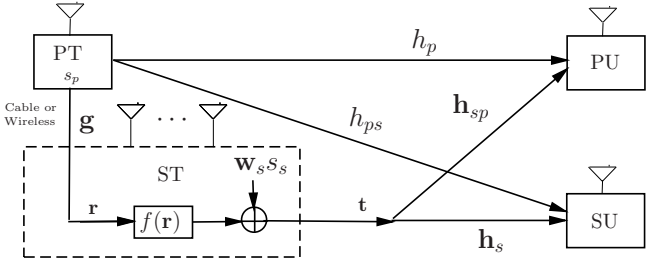}
  \caption{Energy and information cooperation in cognitive radio.}\label{fig:sys:model}
\end{figure}

 We consider cooperation between a primary system and a secondary system in cognitive radio networks, as depicted in Fig. \ref{fig:sys:model}.
 The  primary system consists of a primary transmitter
 (PT) and a primary user (PU), while the secondary system has a secondary transmitter (ST) who serves a
 secondary user (SU). All terminals have a single antenna except that the ST has $N$ antennas.
 The PT intends to send $s_p$ to the PU while the ST transmits signal $s_s$ to the SU with appropriate power. We consider a scenario that
 the primary link is in outage status {\bl when its rate demand cannot be met via the direct link,}
 thus it becomes necessary for the PT to cooperate with  the ST in order to meet the PU's QoS
 requirement. {  Without loss of generality,  we assume the communication duration  $T$ is normalized to be unity. }

 Some common system parameters are introduced as follows.
{\bl{$h_p$, $\qh_s$, $\qh_{sp}$ and $h_{ps}$ are used to denote the
PT-PU, ST-SU,  ST-PU and PT-SU channels},
 respectively.     The PT is connected to the ST either via cable or wireless channel $\qg$.
  {  The PT's total energy (or average transmit power) is $P_p$ and the PU's rate requirement is $r_p$ bps/Hz.
 The ST itself has an initial   total energy of $P_{s0}$} and  further receives/harvests energy from the
 PT.}
  All channels and noise elements are assumed to be statistically independent of each other.
{ We assume that global perfect channel state information (CSI) is available at the ST.}
 After the ST receives both information and energy from the PT, it processes the primary signal, harvests energy then uses the harvested
 energy together with its own {\bl energy}, to serve the SU and relay the signal to the PU. Amplify-and-forward (AF) relaying protocol is employed by the ST.

 We explicitly consider two components in the noise received at a terminal: one is the received thermal noise and the other is due to RF to baseband conversion,
 both are modeled as   zero-mean additive white Gaussian noise (AWGN) with variances of $N_0$ and $N_C$
respectively. Assuming that they are independent,
 {\bl we may consider both types of noise if
 possible and define the combined received noise power as $\tilde N_0=N_0 +
 N_C$.}


 There are different approaches that facilitate the information and energy transfer from the PT to the ST. In the sequel, we will introduce three specific
 cooperation schemes and find their optimal as well as low-complexity
 solutions. {  For fairness, the same amount of energy (for both the PT and the ST) is used  in all schemes.}

\section{Ideal Primary-Cognitive Cooperation}
\subsection{System Model and Problem Formulation}
 We first look into the ideal cooperation between the PT and the ST for information and energy transfer, where  the ST has non-causal information
  about the primary signal  and obtains energy from the PT via reliable backhaulling,  for instance, cable. Note that although in practice, this cooperation scheme
 is either too difficult or too costly to implement,   it provides a performance upper bound for practical cooperation protocols.

  Since the primary message $s_p$ (${\tt E}[|s_p|^2]=1$) is non-causally known at the ST, it can employ the DPC technique to encode the primary signal and
  superimpose its own  secondary signal $s_s$ (${\tt E}[|s_s|^2]=1$) such
  that no primary interference is introduced at the SU
  \footnote{
Note that the   ST can also pre-cancel the interference at the PU
caused by its own secondary signal, however,   this could lead to
performance degradation because the SU will receive interference
from both the PT and  the ST.}.
  The received signal at the PU is
 \be y_p=   (\sqrt{(1-\beta) P_p}h_p +  \qh_{sp}^\dag \qw_p)s_p +  \qh_{sp}^\dag \qw_s s_s + n_{p},\ee
 where $\qw_p$ is the beamforming vector used by the ST to forward the primary signal, and $n_p\in \mathcal{CN}(0,\tilde N_0)$ is the combined received
 noise at the
 PU. $\beta P_p (0\le \beta\le 1)$ denotes the amount of {  energy} transferred to the ST  and received as $\eta\beta P_p$,
 where $\eta$ is the efficiency of energy transfer. The ST then has the total power of $P_{s0} + \eta\beta P_p$ to serve both the PU and the SU.

 Due to the use of DPC, the SU receives
 \be
    y_s =  \qh_{s}^\dag \qw_s s_s + n_s,
 \ee
 where $n_s\in \mathcal{CN}(0,\tilde N_0)$ is the combined received  noise at the SU.
  Then the primary and secondary received signal to interference-plus-noise ratios (SINRs), are, respectively,
 \be
    \Gamma_p=  \frac{|\sqrt{(1-\beta) P_p}h_p + \qh_{sp}^\dag \qw_p|^2}  {|\qh_{sp}^\dag \qw_s |^2 + \tilde N_0}, ~~\mbox{and}~~ \Gamma_s= \frac{|\qh_{s}^\dag \qw_s|^2}{
    \tilde N_0}.
 \ee

 It is easy to see that the optimal $\qw_p$ admits the form $\qw_p = \sqrt{q_p} \frac{\qh_{sp}}{\|\qh_{sp}\|} e^{j\theta}$, where $\theta$ is chosen for coherent reception and
 $q_p\triangleq \|\qw_p\|^2$. As a result, the achievable PU rate is
 \be
    R_p = \log_2 (1+\gamma_p) = \log_2\left( 1+ \frac{(\sqrt{(1-\beta) P_p}h_p + \sqrt{q_p}\|\qh_{sp}\|) ^2}  {|\qh_{sp}^\dag \qw_s |^2 +
    \tilde N_0}\right).
 \ee

  The   problem of maximizing the SU rate subject to PU rate and ST power constraints is written as
 \begin{subequations}\label{eqn:prob:DPC}
  \begin{align}
      \max_{\qw_s,q_p, \beta} \hspace{1cm} &  |\qh_{s}^\dag \qw_s| \\
    \mbox{s.t.} \hspace{1cm} &
\frac{(\sqrt{(1-\beta) P_p}|h_p| + \sqrt{q_p} \|\qh_{sp}\|)^2}  {|\qh_{sp}^\dag \qw_s |^2 + \tilde N_0}
\ge 2^{r_p}-1,\label{eqt:con1}\\
    & \|\qw_s\|^2 + q_p\le  P_{s0} + \beta \eta P_p, \label{eqt:con2}\\
    & 0\le \beta\le 1, q_p>0. \label{eqt:con3}
  \end{align}
 \end{subequations}


\subsection{Feasibility}
 Before solving (\ref{eqn:prob:DPC}), in Proposition 1 we first give the condition under which it is
 feasible and {\bl{the proof is given in   Appendix \ref{app:feas}.}}
\begin{proposition}\label{prop:feas}
 Problem  (\ref{eqn:prob:DPC}) is feasible if and only if the PU rate requirement $r_p$ is {\bl not larger} than
 \be\label{eqn:Rpmax:DPC}
    R_{p,\max}\triangleq\left\{\begin{array}{cc}
              \log_2\left(1+  \frac{(\sqrt{P_p}|h_p| + \sqrt{P_{s0}}\|\qh_{sp}\| )^2}  {
    \tilde N_0}\right), & \\ \mbox{if~}P_p \eta^2 \|\qh_{sp}\|^2 < P_{s0} |h_p|^2; & \\
 \log_2\left(1+  {\frac{  P_p \eta  + P_{s0}  }{
\eta^2 \|\qh_{sp}\|^2 + \eta   |h_p|^2}}\frac{\left(|h_p|^2 +
{\eta} \|\qh_{sp}\|^2 \right)^2}{\tilde N_0}\right), &
\\ \mbox{otherwise}.&
            \end{array}\right.
 \ee
\end{proposition}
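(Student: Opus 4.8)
The plan is to observe that feasibility of (\ref{eqn:prob:DPC}) does not depend on the objective, so we only need to decide when the constraint set is nonempty; equivalently, we compute the largest primary SINR $\Gamma_p$ attainable under the power budget (\ref{eqt:con2})--(\ref{eqt:con3}) and check when it reaches the threshold $2^{r_p}-1$ imposed by (\ref{eqt:con1}). The key structural remark is that the SU beamformer $\qw_s$ enters only through the interference term $|\qh_{sp}^\dag\qw_s|^2$ in the denominator of $\Gamma_p$ and through the power it consumes in (\ref{eqt:con2}); both effects point the same way, so setting $\qw_s=\qzero$ simultaneously drives the denominator down to $\tilde N_0$ and frees the entire ST budget for relaying the primary signal. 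Hence the maximal $\Gamma_p$ is attained at $\qw_s=\qzero$ with $q_p=P_{s0}+\beta\eta P_p$, and the feasibility question collapses to the scalar problem of maximizing the numerator $f(\beta)\triangleq \sqrt{(1-\beta)P_p}\,|h_p|+\sqrt{P_{s0}+\beta\eta P_p}\,\|\qh_{sp}\|$ over $\beta\in[0,1]$, so that $R_{p,\max}=\log_2\!\big(1+\tilde N_0^{-1}\max_{\beta\in[0,1]}f(\beta)^2\big)$.

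Next I would note that $f$ is a sum of two concave functions of $\beta$ and is therefore concave, so its unconstrained stationary point governs the maximizer on $[0,1]$. Setting $f'(\beta)=0$ and clearing the square roots yields the candidate
\[
\beta^\star=\frac{P_p\eta^2\|\qh_{sp}\|^2-P_{s0}|h_p|^2}{\eta P_p\big(|h_p|^2+\eta\|\qh_{sp}\|^2\big)},
\]
and a short check shows $\beta^\star\le 1$ always, while $\beta^\star\ge 0$ iff $P_p\eta^2\|\qh_{sp}\|^2\ge P_{s0}|h_p|^2$. This dichotomy is exactly the case split in (\ref{eqn:Rpmax:DPC}): when $P_p\eta^2\|\qh_{sp}\|^2<P_{s0}|h_p|^2$ the stationary point is negative, so by concavity $f$ is decreasing on $[0,1]$ and the maximum sits at the boundary $\beta=0$, giving $f(0)^2=(\sqrt{P_p}|h_p|+\sqrt{P_{s0}}\|\qh_{sp}\|)^2$ and the first branch; otherwise the interior point $\beta^\star$ is optimal and produces the second branch.

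The main obstacle is the algebraic reduction of $f(\beta^\star)^2$ to the compact form in (\ref{eqn:Rpmax:DPC}). I would substitute $\beta^\star$ and simplify the two radicands to
\[
(1-\beta^\star)P_p=\frac{|h_p|^2(P_{s0}+\eta P_p)}{\eta\big(|h_p|^2+\eta\|\qh_{sp}\|^2\big)},\quad P_{s0}+\beta^\star\eta P_p=\frac{\eta\|\qh_{sp}\|^2(P_{s0}+\eta P_p)}{|h_p|^2+\eta\|\qh_{sp}\|^2},
\]
so that the common factor $\sqrt{(P_{s0}+\eta P_p)/(|h_p|^2+\eta\|\qh_{sp}\|^2)}$ can be pulled out and the remaining bracket collapses via $|h_p|^2/\sqrt\eta+\|\qh_{sp}\|^2\sqrt\eta=(|h_p|^2+\eta\|\qh_{sp}\|^2)/\sqrt\eta$. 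This leaves $f(\beta^\star)^2=(P_{s0}+\eta P_p)(|h_p|^2+\eta\|\qh_{sp}\|^2)/\eta$, precisely the argument of the logarithm in the second branch. Finally, monotonicity of $\log_2(1+\cdot/\tilde N_0)$ converts ``$\Gamma_p$ is achievable'' into ``$r_p\le R_{p,\max}$'', which closes the equivalence.
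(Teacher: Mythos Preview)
Your proposal is correct and follows essentially the same route as the paper: set $\qw_s=\qzero$, use the full budget $q_p=P_{s0}+\beta\eta P_p$, and optimize the single scalar $\beta$ via its stationary point, which matches the paper's $\bar\beta$ and is then clipped to $[0,1]$. Your version is in fact more complete than the paper's, since you supply the concavity argument that justifies the boundary/interior dichotomy, verify explicitly that $\beta^\star\le 1$ always holds, and carry out the algebraic reduction of $f(\beta^\star)^2$ to the closed form in the second branch of (\ref{eqn:Rpmax:DPC}), steps the paper omits.
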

  According to Proposition 1, when  $P_p \eta^2
\|\qh_{sp}\|^2 < P_{s0} |h_p|^2$ which means that the transferred
power from the PT to the ST cannot bring sufficient performance
gain,   no energy transfer is needed.  This may happen when primary
power is limited, the secondary power is abundant, or the efficiency
of the power transfer is too low, etc.

\subsection{Simplified {\bl Characterization of \eqref{eqn:prob:DPC}} }

{\bl{ Problem \eqref{eqn:prob:DPC} is equivalent to the following
convex problem:
\begin{subequations}\label{eqt:cov_opt}
 \begin{align}
  \max_{\w_s , \bv} \hspace{1cm} & \re\left(\h_s^{\dagger} \w_s\right) \label{eqt:obj}\\
  \st \hspace{1cm} &  \left( \g^{\tran} \bv  \right)^2
\geq (2^{r_p}-1) \left( |\h_{sp}^{\dagger} \w_s|^2 + \tilde{N}_0 \right), \label{eqt:soc}\\
 & \| \w_s\|^2 + \bv^{\tran} \D \bv \leq P_{s0} +  \eta P_p,\\
 & \bv \geq \0, [\bv]_{1}\leq 1.
 \end{align}
\end{subequations} where
$\g^{\tran}=[\sqrt{P_p} |h_p|, \|\h_{sp}\|]$,
  $\D= \diag(\eta P_p, 1), {  \qv\triangleq [\sqrt{1-\beta},\sqrt{q_p}]}$. Optimization problem in \eqref{eqt:cov_opt} is a
second order cone problem (SOCP) and is convex. It can be solved
very efficiently. The steps of recasting \eqref{eqn:prob:DPC} to the
convex problem \eqref{eqt:cov_opt} are given in Appendix
\ref{app:convex}.}}

 Although problem \eqref{eqn:prob:DPC} can be manipulated as a convex problem, it does not offer much
insight into the structure of the solution. In the following, from
the characterization of the optimal solution structure, we identify
very efficient solutions. To this end, we first show that at the
optimality point, the two constraints in \eqref{eqn:prob:DPC} are
active. If the power constraint is not active at the optimality
point, one can update the beamforming vector $\w_s$ to $\w_s'= \w_s
+ \tau \Pi_{\h_{sp}}^{\perp} \h_s$ with $\tau$ being a very small
scalar, which increases the objective value while keeping the first
constraint unchanged. This contradicts  the optimality point
assumption. Then if the first constraint is not active, one can
decrease the value of $q_p$ and increase the value of $\beta$ such
that the first constraint is active but the power constraint is not
active. This leads us back to the previous case and a contradiction
of the optimality assumption results. Hence at the optimality point,
the two constraints are always active.
The main result of the simplified optimization problem is given in
Proposition 2 {\bl and its proof is provided in Appendix
\ref{app:feas2}.}

\begin{proposition}\label{prop:2d}
 Problem (\ref{eqn:prob:DPC}) is equivalent to the
 problem (\ref{eqn:prob:DPC:2D})  at the top of next page.
 \begin{figure*}
  \bea\label{eqn:prob:DPC:2D}
    \max_{\beta, q_p} &&      \sqrt{ \frac{|\sqrt{(1-\beta) P_p}\frac{|h_p|}{\|\qh_{sp}\|} + \sqrt{q_p}|^2}  {2^{r_p}-1 }  - \frac{\tilde N_0}{\|\qh_{sp}\|^2}}
     \|\Pi_{\qh_{sp}} \qh_s\| \\
     &&  + \sqrt{(P_{s0} + \beta \eta P_p) - q_p - \frac{|\sqrt{(1-\beta) P_p}\frac{|h_p|}{\|\qh_{sp}\|} + \sqrt{q_p}|^2}  {2^{r_p}-1 }  - \frac{\tilde N_0}{\|\qh_{sp}\|^2}}
\|\Pi_{\qh_{sp}}^\bot \qh_s\| \notag \\
\mbox{s.t.}&& 0\le \beta\le 1, q_p>0. \notag
 \eea
 \end{figure*}
 \end{proposition}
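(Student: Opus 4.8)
The plan is to exploit the already-established fact that both constraints of (\ref{eqn:prob:DPC}) are active at the optimum, together with a decomposition of the secondary beamformer $\qw_s$ into the component aligned with the cross channel $\qh_{sp}$ and the component orthogonal to it. Writing $\qw_s = \Pi_{\qh_{sp}}\qw_s + \Pi_{\qh_{sp}}^\bot\qw_s$, the first observation is that the rate constraint (\ref{eqt:con1}) involves $\qw_s$ only through $|\qh_{sp}^\dag\qw_s|^2 = \|\qh_{sp}\|^2\,\|\Pi_{\qh_{sp}}\qw_s\|^2$, so it constrains solely the norm of the parallel component, whereas the power constraint (\ref{eqt:con2}) couples the two component norms. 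This separation is what allows the vector problem to collapse to a two-variable scalar problem in $(\beta,q_p)$.

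First I would impose (\ref{eqt:con1}) with equality, solve for $|\qh_{sp}^\dag\qw_s|^2$, and divide by $\|\qh_{sp}\|^2$ to pin down the parallel norm as
\[
\|\Pi_{\qh_{sp}}\qw_s\|^2 = \frac{|\sqrt{(1-\beta)P_p}\frac{|h_p|}{\|\qh_{sp}\|} + \sqrt{q_p}|^2}{2^{r_p}-1} - \frac{\tilde N_0}{\|\qh_{sp}\|^2},
\]
which is exactly the quantity under the first radical of (\ref{eqn:prob:DPC:2D}). Next, imposing the active power constraint (\ref{eqt:con2}) gives $\|\Pi_{\qh_{sp}}^\bot\qw_s\|^2 = (P_{s0}+\beta\eta P_p) - q_p - \|\Pi_{\qh_{sp}}\qw_s\|^2$, reproducing the quantity under the second radical. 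At this stage the norms of both components are fixed functions of $(\beta,q_p)$ alone.

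The only remaining freedom is the phase/direction of $\qw_s$ within each subspace, which affects the objective $|\qh_s^\dag\qw_s|$ but neither constraint. Splitting $\qh_s^\dag\qw_s = \qh_s^\dag\Pi_{\qh_{sp}}\qw_s + \qh_s^\dag\Pi_{\qh_{sp}}^\bot\qw_s$, I would bound each term by Cauchy--Schwarz, giving $|\qh_s^\dag\Pi_{\qh_{sp}}\qw_s| \le \|\Pi_{\qh_{sp}}\qh_s\|\,\|\Pi_{\qh_{sp}}\qw_s\|$ and the analogous bound with $\|\Pi_{\qh_{sp}}^\bot\qh_s\|$ for the orthogonal part. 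Because the parallel subspace is one-dimensional and the orthogonal subspace can freely host the direction $\Pi_{\qh_{sp}}^\bot\qh_s$, both bounds are simultaneously attainable; aligning the phases so that the two complex contributions are co-phased then makes the triangle inequality tight as well. Hence for fixed $(\beta,q_p)$ the maximal objective equals $\|\Pi_{\qh_{sp}}\qw_s\|\,\|\Pi_{\qh_{sp}}\qh_s\| + \|\Pi_{\qh_{sp}}^\bot\qw_s\|\,\|\Pi_{\qh_{sp}}^\bot\qh_s\|$, and substituting the two norms found above yields precisely the objective of (\ref{eqn:prob:DPC:2D}).

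The step I expect to be the main obstacle is this last joint-attainability argument: I must verify that the phase freedom in the (complex) one-dimensional parallel subspace and the direction-plus-phase freedom in the orthogonal complement can be used \emph{simultaneously} to saturate both Cauchy--Schwarz bounds and to co-phase the two contributions, so that no loss is incurred in replacing the vector optimization by the scalar-norm expression. A secondary point to confirm is that the two radicands are nonnegative over the relevant $(\beta,q_p)$ region --- nonnegativity of the first follows from feasibility of (\ref{eqt:con1}) and of the second from feasibility of (\ref{eqt:con2}) --- so that the reduced two-dimensional problem is well defined wherever (\ref{eqn:prob:DPC}) is feasible.
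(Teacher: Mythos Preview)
Your approach is correct and is essentially the same as the paper's: both exploit activity of the two constraints and the parallel/orthogonal decomposition of $\qw_s$ with respect to $\qh_{sp}$ to reduce the vector problem to two scalars. The only methodological difference is that the paper invokes a known parametrization of the optimal $\qw_s$ (Proposition~1 of \cite{Jorswieck-WSA-11}), namely $\qw_s=\sqrt{q_s}\bigl(\sqrt{\lambda}\,\Pi_{\qh_{sp}}\qh_s/\|\Pi_{\qh_{sp}}\qh_s\|+\sqrt{1-\lambda}\,\Pi_{\qh_{sp}}^\bot\qh_s/\|\Pi_{\qh_{sp}}^\bot\qh_s\|\bigr)$, and then solves for $\lambda$ from the active rate constraint, whereas you derive the same optimal structure from scratch via Cauchy--Schwarz in each subspace together with phase alignment. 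Your joint-attainability argument is exactly what justifies that parametrization, so your version is simply a self-contained proof of the cited lemma; the two routes coincide once the norms $\|\Pi_{\qh_{sp}}\qw_s\|^2=\lambda q_s$ and $\|\Pi_{\qh_{sp}}^\bot\qw_s\|^2=(1-\lambda)q_s$ are identified.
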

  {  Problem (\ref{eqn:prob:DPC:2D}) is not convex in general, so we
  propose to find its optimal solution via 2-D search.}

We illustrate the feasibility region in Figure
\ref{fig:feasibility_reg2}, where {\rl{ randomly chosen channel
realizations and system settings are given by $\qh_s =[-0.0823 +
1.3427i, -0.6438 - 0.4291i, 0.4338 - 0.2197i]^T$, $\qh_{sp} =[0.5345
- 0.8716i, 0.2872 - 0.4043i, 0.0951 - 0.3264i]^T$, $h_p = -0.4692 +
0.8665i$, $\gamma_0=5$, $P_p= P_{sp} = 10$, $\eta = 0.8$ and $N= 3$.
}} The feasibility region of two target SINR values are shown. {
Each blue ring circle is composed of tuples of $(|\h_{sp}^\dag
\w_s|,|\h_{s}^\dag \w_s|^2)$ and  is obtained by varying $\w_s$ and
keeping $q_p$ and $\beta$ fixed.}  When $q_p$ increases and $\beta$
fixed, constraint \eqref{eqt:con1} on $\h_{sp}^{\dagger}\w_s$ is
more relaxed. This is illustrated by the red lines, each line
corresponding to a given pair $q_p, \beta$. When $q_p$ increases,
the red line moves up. Similarly, when  $q_p$ increases and $\beta$
fixed, \eqref{eqt:con2} becomes more strict and thus the blue region
shrinks. Similar behavior can be observed if $\beta$ varies and
$q_p$ is kept constant. The feasibility region is the blue region
under the corresponding red line. Hence, there is a conflict between
constraints \eqref{eqt:con1} and \eqref{eqt:con2}. In Figure
\ref{fig:feasibility_reg2}, the black cross marker shows the optimal
point for each given tuple of $(q_p, \beta)$. The optimal point is
the {  black cross} with the largest x-coordinate. {  In Figure
\ref{fig:ch_pow}, we collect the optimal points (shown in black
crosses in both Figure 2 and 3) for each paired value of $(q_p,
\beta)$. Out of these optimal points of each realization of
$(q_p,\beta)$, we mark the optimal point of the whole set as a red
square. If the zero-forcing scheme is implemented, the achievable
points by definition are always on the x-axis and the optimal point
is marked as a blue triangle in Figure \ref{fig:ch_pow}.}

\begin{figure}
 \begin{center}
  \includegraphics[width=3.1in]{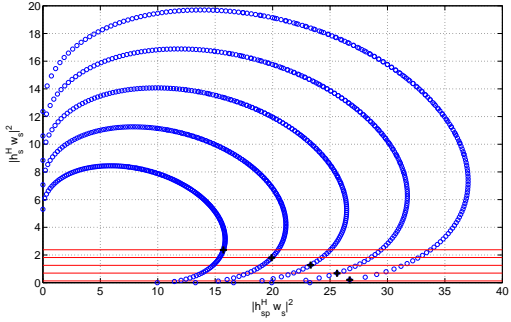}
\caption{  The channel power region of $| \qh_s^\dagger \w_s |^2$
against $| \qh_{sp}^\dagger \w_s |^2$. ${(\cdot)}^H$ represents
Hermitian operation in the figure and is denoted as ${(\cdot)}^\dag$
in the main text. The red lines show the constraint moving up when
$q_p$ increases and $\beta$ fixed. At the same time, the channel
power region, shown in blue,  shrinks. The feasibility region is the
blue region under the corresponding red line. Within the feasibility
regions, the point with the maximum x-coordinate is marked with a
black cross. \label{fig:feasibility_reg2}}
 \end{center}
\end{figure}

\begin{figure}
 \begin{center}
  \includegraphics[width=3.2in, keepaspectratio]{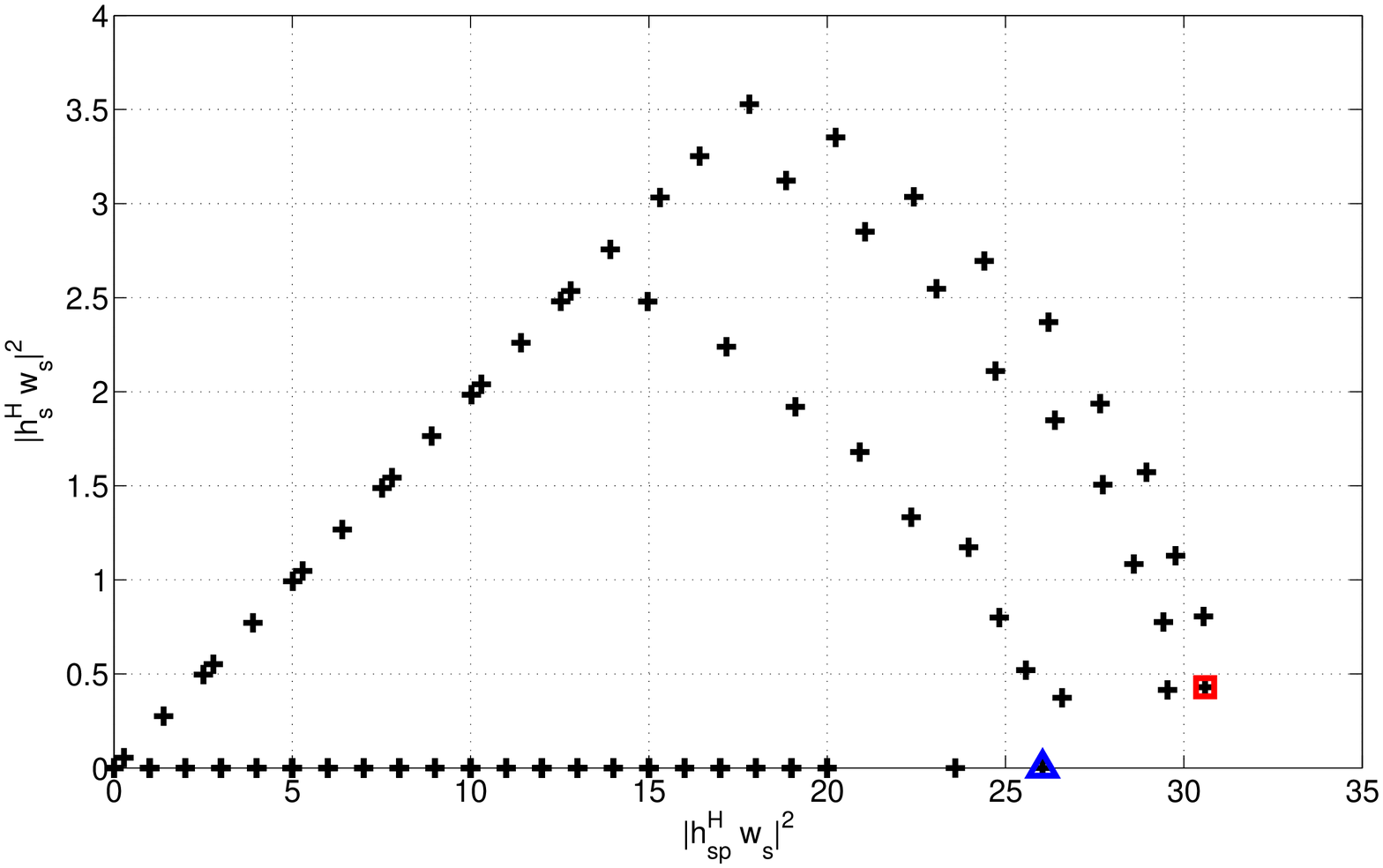}
\caption{   The optimal point for each pair of values of $(q_p,
\beta)$ is collected and shown in black crosses. The optimal point
that attains the largest value of $|h_{sp}^\dag \w_s|^2$, is marked
as a red square. When the zero-forcing scheme in Section III-D is
used, the optimal point is marked as a blue triangle.
 \label{fig:ch_pow}}
 \end{center}
\end{figure}

 \subsection{ZF Solution}\label{sec:zf}
  Here we study a  suboptimal yet closed-form  solution with ZF constraint on the interference power from the ST to the PU, i.e., $\qh_{sp}^\dag\qw_s=0$. To satisfy this, the
  beamforming vector $\qw_s$ is chosen as
   \be\label{eqn:DPC:ZF:ws}
     \qw_{s,ZF} = \sqrt{q_s} \frac{ \left(\qI - \frac{\qh_{sp}\qh_{sp}^\dag}{\|\qh_{sp}\|^2}\right) \qh_s}{\|\left(\qI - \frac{\qh_{sp}\qh_{sp}^\dag}{\|\qh_{sp}\|^2}\right)
    \qh_s\|},
 \ee
 and the resulting SU channel gain is
  \be
    |\qw_{s,ZF}^\dag\qh_{s}|^2 = q_s\|\qh_{s}\|^2(1-\delta^2),  \delta^2 \triangleq \frac{|\qh_{sp}^\dag\qh_{s}|^2}{\|\qh_{sp}\|^2\|\qh_s\|^2}.
 \ee
 As a result, the optimization problem is formulated as
{\small \bea
    \max_{q_s,q_p, \beta} &&  q_s\\
    \mbox{s.t.} && \frac{(\sqrt{(1-\beta) P_p}|h_p| + \sqrt{q_p}\|\qh_{sp}\|)^2}  { \tilde N_0}\ge 2^{r_p}-1,\label{eqn:dpc:zf:c1}\\
    && q_s + q_p\le  P_{s0} + \beta \eta P_p,\label{eqn:dpc:zf:c2}\\
    && 0\le \beta\le 1, q_s\ge 0, q_p\ge 0.\notag
 \eea}
 Due to the fact that both constraints (\ref{eqn:dpc:zf:c1}) and (\ref{eqn:dpc:zf:c2}) should hold with equality, $q_s$ can be expressed as
 {\small \bea
    & &q_s=  P_{s0} + \beta \eta P_p  - q_p\\
     &&= P_{s0} + \beta  P_p \left(\eta +  \frac{|h_p|^2}{\|\qh_{sp}\|^2}\right)  - \\
     &&\frac{(2^{r_p}-1)\tilde N_0  +  P_p|h_p|^2 -2 \sqrt{(2^{r_p}-1)\tilde N_0}\sqrt{(1-\beta) P_p}|h_p| }{
    \|\qh_{sp}\|^2}.\notag\label{eqn:beta:qs}
 \eea}

 By setting the first-order derivative of (\ref{eqn:beta:qs}) to be zero,
 we can derive the optimal $\beta$ as
   \be\label{eqn:dpc:zf:beta}
       \beta^*  = \left[1- \frac{    {(2^{r_p}-1)\tilde N_0}   }{  {P_p}\left(|h_p|+ \frac{\eta \|\qh_{sp}\|^2}{|h_p|  } \right)^2 }\right]_0^1.
  \ee

  {  The expression (\ref{eqn:dpc:zf:beta}) verifies the intuition that  the optimal $\beta$ is
   an increasing function of $P_p$, $\|\qh_{sp}\|^2$   and $\eta$. If  the PT has abundant power, then it is more likely to transfer energy to the ST.
   On the other hand, if the ST-PU link is weak  or the efficiency of energy transfer is low, it is not worth transferring too much energy to the
  ST. There is an interesting observation about $|h_p|$.  If $|h_p|$ is close to zero, $\beta$ approaches 1, which means that the primary system relies on the ST to forward its signal, therefore transfers all its energy to the ST.
  As the primary channel becomes better or  $|h_p|$ increases but is below the threshold $\sqrt{\eta} \|\qh_{sp}\|$, $\beta$ is a  decreasing function of $|h_p|$;
  once  $|h_p|$ exceeds the threshold $\sqrt{\eta} \|\qh_{sp}\|$, $\beta$ becomes an increasing function of $|h_p|$ and this is because the primary channel is good enough therefore the primary
  system can help the secondary transmission. }

{    The channel power values achieved by the ZF solution achieves
are shown in Figure \ref{fig:ch_pow}. The corresponding optimal
solution is marked blue. } The ZF solution is simple and we observe
that its performance is quite good {\rl{in this example. }}

    {\bl The implementation of the ideal cooperation requires cable and common energy source, as well as signal processing and coding capabilities for DPC}. In the following two sections, we consider two practical energy and information
   cooperation schemes. We assume that the ST adopts the AF relaying protocol to forward the primary signals instead of processing non-causal primary
   information.

\section{Power Splitting Cooperation -- System Model and Optimization}

\subsection{System Model and Problem Setting}
 In this section, we assume the ST   first listens to the primary transmission via the channel $\qg$ then forwards it to the PU, therefore
   two channel phases are required to complete the communications.
 {  In Phase I, the PT broadcasts its data $s_p$ with   power $2P_p$ where the factor 2 is because the PT only transmits during the first half duration,
 then the received signals at the PU and the ST are, respectively,
 \be y_{p1}=  \sqrt{2P_p} h_p s_p + n_{p1},~~ \mbox{and}~~\qr= \sqrt{2P_p}  \qg  s_p +
 \qn_R,\ee}
 where  $ n_{p1}\in \mathcal{CN}(0,\tilde N_0)$ is the combined noise at the PU while
  $\qn_R \in \mathcal{CN}(\qzero,N_0 \qI)$ is the thermal noise received at  ST, respectively.

 \begin{figure}[]
        \centering
                 \includegraphics[width=3.6in]{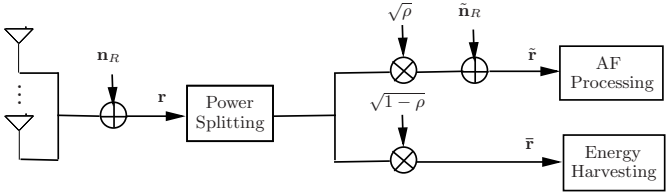}
        \caption{The power splitting EH technique at the ST. }
        \label{fig:PS}
\end{figure}
To forward primary information as well as  harvest RF energy at the ST, the practical power
splitting technique \cite{Zhang_13} is used, which
is depicted in Fig.  \ref{fig:PS} and works as follows. The ST splits the RF signal into two portions: one for forwarding to the PU after AF
processing  and the other for  harvesting energy, with relative power of $\rho$ and $1-\rho$, respectively. The signal for AF processing will be
converted from the RF   to the baseband, and this results in the received signal \be
    \tilde \qr = \sqrt{\rho} \qr + \tilde \qn_R,
\ee where $\tilde \qn_R\sim \mathcal{CN}(\qzero,N_C \qI)$ is the complex AWGN during the RF to baseband conversion.
 The ST  processes the received signal and produces $f(\qr)= \qA\tilde\qr$. Without loss of optimality, it has been shown that the optimal $\qA$ has the structure of
  $\qA = \qw_p \qg^\dag$  according to \cite{Zheng_CCRN_13}, where $\qw_p$ is a new transmit beamforming vector to be optimized. This is also
  intuitive because there is a single primary data stream, the best reception strategy for
   the ST is to use maximal ratio combining (MRC).

The signal for EH is simply \be
    \bar \qr = \sqrt{1-\rho} \qr = \sqrt{1-\rho} \left(  \qg \sqrt{2P_p}  s_p + \qn_R\right).
\ee Assuming the {  energy} transfer efficiency of $\eta$, the
amount of the harvested energy is
 \be
    P_{EH} = \frac{\eta(1-\rho)(2P_p\|\qg\|^2+ N_0)}{2}.
\ee {  Therefore, the ST will have a total transmit power of
 $2P_{s0}+\eta(1-\rho)(2P_p\|\qg\|^2+ N_0)$ where   the factor 2 is due to the fact that the ST only
transmits signals in the second half of the communication time.}

In Phase II, the ST superimposes the relaying signal $f(\qr)$ with
its own data $s_s$   using the cognitive beamforming vector $\qw_s$,
then transmits it to both the PU and the SU. {\bl Note that DPC is
not used at the ST.} In this phase, the PT remains idle.

The ST's transmit signal is
 \bea
      \qt &=  &          \qw_s s_s+  \qw_p\qg^\dag \tilde\qr \\& =&
 \qw_s s_s+  \sqrt{2\rho P_p}\qw_p \|\qg\|^2  s_p +\sqrt{\rho}\qw_p\qg^\dag \qn_R + \qw_p\qg^\dag\tilde
 \qn_R,\notag
 \eea
 with average power
  \be
    p_R = {\tt E} \| \qt\|^2 = \|\qw_s\|^2 + (2P_p \rho \|\qg\|^4 + \rho\|\qg\|^2N_0  + \|\qg\|^2 N_C)\|\qw_p\|^2.
     \ee
 The received signal at the SU is
 \bea
   y_s &=&\qh_{s}^\dag\qt + n_s\notag\\
    &=&
    \qh_{s}^\dag\qw_s s_s +
     \sqrt{\rho}\qh_{s}^\dag\qw_p\|\qg\|^2 s_p + \sqrt{\rho}\qh_{s}^\dag\qw_p\qg^\dag\qn_R\notag\\
     && +  \qh_{s}^\dag\qw_p\qg^\dag\tilde \qn_R+ n_{s},
\eea where $ n_{s}\in \mathcal{CN}(0,\tilde N_0)$ is the combined noise at the SU.
 The received  SINR  at SU   is then expressed as
 \be
    \Gamma_s =\frac{|\qh_{s}^\dag\qw_s|^2}{ ( 2P_p \rho \|\qg\|^4 + \rho \|\qg\|^2 N_0  +  \|\qg\|^2 N_C) |\qh_s^\dag\qw_p|^2+  \tilde N_0 },
\ee and the achievable SU rate is $R_s =
\frac{1}{2}\log_2(1+\Gamma_s)$ where the factor $\frac{1}{2}$ arises
due to the two orthogonal channel uses. The received signal at the
PU is
 \bea
     y_{p2} &=&\qh_{sp}^\dag\qt+n_{p2} \notag\\
    &=&    \qh_{sp}^\dag  \qw_s s_s+ \rho\qh_{sp}^\dag \qw_p\|\qg\|^2 s_p +  \rho \qh_{sp}^\dag \qw_p\qg^\dag \qn_R  \notag\\ && + \qh_{sp}^\dag\qw_p\qg^\dag\tilde \qn_R+
    n_{p2},
\eea where $n_{p2}\in \mathcal{CN}(0,\tilde N_0)$ is the combined noise at the PU during Phase II.

 Applying the MRC strategy to $y_{p1}$ and $y_{p2}$, the received SINR of the PU is the sum  of two channel uses,
 and consequently, the achievable PU rate is
 \bea
    &&R_p=
    \frac{1}{2}\log_2\Big(1+\frac{2P_p |{h}_p|^2}{\tilde N_0}
    + \\&&
      \frac{  2P_p \rho\|\qg\|^4| \qh_{sp}^\dag\qw_p|^2}  { |\qh_{sp}^\dag\qw_s|^2+
    (\rho\|\qg\|^2N_0 + \|\qg\|^2 N_C)|\qh_{sp}^\dag \qw_p|^2+ \tilde N_0}\Big).\notag
 \eea

 Next we can formulate the problem of  maximizing the SU rate $R_s$ subject to the PU's
rate constraint $r_p$ and the ST's transmit
 power constraint $2(P_{s0}+P_{EH})$, by jointly  optimizing
 the power splitting parameter $\rho$, the cognitive beamforming vector $
\qw_s$, and the forwarding
 beamforming vector $\qw_p$.
 Using the monotonicity between the received SINR and the achievable rate, the optimization problem can be written as
  {\small    \bea\label{eqn:prob:rate:max:AF:v0}
      \max_{\qw_s,\qw_p,\rho} && \frac{|\qh_{s}^\dag\qw_s|^2}{ ( 2P_p \rho \|\qg\|^4 + \rho \|\qg\|^2N_0  +  \|\qg\|^2 N_C) |\qh_s^\dag\qw_p|^2+  \tilde N_0 } \\
    \mbox{s.t.}
   &&  \frac{  |\qh_{sp}^\dag\qw_p|^2}    {  |\qh_{sp}^\dag\qw_s|^2   + \tilde N_0}
    \ge \frac{\gamma_{p}^{'}}{ \rho\|\qg\|^4-\gamma_{p}^{'}(\rho\|\qg\|^2N_0 + \|\qg\|^2 N_C)}, \notag \\
&&  \|\qw_s\|^2 + (2P_p \rho \|\qg\|^4 + \rho\|\qg\|^2N_0  + \|\qg\|^2 N_C)\|\qw_p\|^2 \le \notag\\ &&   2P_{s0}+ \eta(1-\rho)(2P_p\|\qg\|^2+ N_0) , \notag\\
&& 0\le \rho\le 1,\notag
 \eea}
where we have defined  $\gamma_{p}^{'}\triangleq \frac{2^{2
r_p}-1}{2P_p}-\frac{|h_p|^2}{\tilde N_0}$.

\subsection{Feasibility Check}
 Before solving   problem (\ref{eqn:prob:rate:max:AF:v0}), we first investigate its feasibility,
and this can be achieved by finding the maximum PU rate $R_P$ or equivalently $\gamma_p'$.
 To achieve the maximum $\gamma_p'$, we set $\qw_s=\qzero$ then we have $\|\qw_p\|^2 =  \frac{2P_{s0}+ \eta(1-\rho)(2P_p\|\qg\|^2+ N_0)}{2P_p \rho \|\qg\|^4 + \rho\|\qg\|^2N_0  + \|\qg\|^2 N_C}$,
 and reach an optimization problem about $\rho$ below:
 {\small
   \bea\label{eqn:prob:rate:max:AF:v4}
      \max_{\rho} && \frac{ \rho\|\qg\|^4}{\frac{\tilde N_0(2P_p \rho \|\qg\|^4 + \rho\|\qg\|^2N_0  + \|\qg\|^2 N_C)}{   \|\qh_{sp}\|^2  \left(2P_{s0}+ \eta(1-\rho)(2P_p\|\qg\|^2+
      N_0)\right) } +  (\rho\|\qg\|^2N_0 + \|\qg\|^2    N_C)}\notag\\
    \mbox{s.t.}&& 0\le \rho\le 1.
 \eea}
The {\bl unique} optimal $\rho^*$ can be computed in closed form,
despite its complicated expression. For details, please see Appendix
\ref{app:feas_pow}. While $\rho^*$ corresponds to a maximum PU rate
$R_P^*$, we can choose any rate smaller than $R_P^*$ in solving
(\ref{eqn:prob:rate:max:AF:v0}).

\subsection{The Optimal Solution}
 Assuming problem (\ref{eqn:prob:rate:max:AF:v0}) is feasible, we study how to find its optimal solution.

 By change of variable $\qw_p: = \sqrt{2 P_p \rho \|\qg\|^4 + \rho \|\qg\|^2N_0  +  \|\qg\|^2 N_C} \qw_p$, we write (\ref{eqn:prob:rate:max:AF:v0})
 in an equivalent but more compact form as
   {\small \bea\label{eqn:prob:rate:max:AF:v1}
      \max_{\qw_s,\qw_p,\rho} && \frac{|\qh_{s}^\dag\qw_s|^2}{  |\qh_s^\dag\qw_p|^2+  \tilde N_0 } \\
    \mbox{s.t.}
   &&  \frac{  |\qh_{sp}^\dag\qw_p|^2}    {  |\qh_{sp}^\dag\qw_s|^2   + \tilde N_0}
    \ge \gamma_p^{''}, \notag \\
&&  \|\qw_s\|^2 + \|\qw_p\|^2 \le   2P_{s0}+ \eta(1-\rho)(2P_p\|\qg\|^2+ N_0),\notag\\
&&0\le \rho\le 1,\notag
 \eea}
where $\gamma_p^{''} \triangleq \frac{(2P_p \rho \|\qg\|^2 + \rho
N_0  + N_C)\gamma_{p}^{'}}{ \rho\|\qg\|^2-\gamma_{p}^{'}(\rho N_0 +
    N_C)}$. We find the following lemma  useful to solve (\ref{eqn:prob:rate:max:AF:v1}).

  \begin{lemma}\label{lemma:gam2}
     Consider {\bl a general} maximization problem below:
    \bea\label{eqn:prob:rate:max:basic}
    \max_{\qw_1,\qw_2} && \frac{|\qh_2^\dag\qw_2|^2}{\sigma^2+
    |\qh_2^\dag\qw_1|^2}\\
    \mbox{s.t.} && \frac{|\qh_1^\dag\qw_1|^2}{\sigma^2+
     |\qh_1^\dag\qw_2|^2}\ge \gamma_1\notag\\
     &&\|\qw_1\|^2+  \|\qw_2\|^2\le P_C,\notag
    \eea
    where   $\qh_1,\qh_2$ are $N\times 1$ vectors and $\gamma_1, P_C,\sigma^2$ are positive scalars.
    Define $\zeta^2 \triangleq \frac{|\qh_1^\dag\qh_2|^2}{\|\qh_1\|^2\|\qh_2\|^2}$. Suppose (\ref{eqn:prob:rate:max:basic}) is feasible and
      its optimal objective value  is  $\gamma_2$, then   $\gamma_2$   is uniquely determined by the following equation
    set:
\bea \left\{
\begin{array}{ll}
  \lambda_1& =\frac{ \gamma_1\sigma^2(\sigma^2+\lambda_2\|\qh_2\|^2)}{\|\qh_1\|^2(\sigma^2+\lambda_2\|\qh_2\|^2(1-\zeta^2))}\label{eqn:lambda:c},\\
      \lambda_2& =\frac{\gamma_2\sigma^2(\sigma^2+\lambda_1\|\qh_1\|^2)}{\|\qh_2\|^2(\sigma^2+\lambda_1\|\qh_1\|^2(1-\zeta^2))},\\
     \lambda_1+\lambda_2 &= P_C,\label{eqn:lambda:pow}
     \end{array}\right.
\eea
where $\lambda_1,\lambda_2$ are {  dual} variables.
\end{lemma}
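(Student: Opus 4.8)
The plan is to recast the ratio maximization in \eqref{eqn:prob:rate:max:basic} as a minimum-power problem and then to invoke uplink--downlink (broadcast--multiple-access) duality, which produces exactly the stated fixed-point equations with $\lambda_1,\lambda_2$ playing the role of the dual (virtual uplink) powers. First I would show that the sum-power constraint is active at the optimum: if it were slack, adding a small multiple of $\Pi_{\qh_1}^{\perp}\qh_2$ to $\qw_2$ raises $|\qh_2^\dag\qw_2|^2$, hence the objective, while leaving $|\qh_1^\dag\qw_2|^2$ and therefore the first constraint unchanged---contradicting optimality, exactly as in the ideal-cooperation argument preceding Proposition~2. Letting $\gamma_2$ denote the optimal value, the minimum total power needed to meet the two targets $(\gamma_1,\gamma_2)$ then equals $P_C$: it is clearly $\le P_C$, and if it were strictly smaller the residual budget would let a target $(\gamma_1,\gamma_2+\epsilon)$ be met (the minimum power being continuous and strictly increasing in $\gamma_2$), contradicting the optimality of $\gamma_2$. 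This reduces the lemma to the minimum-power two-user MISO downlink problem evaluated at the budget boundary, whose optimum has both SINR constraints tight.

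Next I would apply uplink--downlink duality to this minimum-power problem: the minimum downlink sum power to meet $(\gamma_1,\gamma_2)$ equals the minimum sum power of a dual single-receiver uplink in which the two users transmit over the same channels $\qh_1,\qh_2$ with powers $\lambda_1,\lambda_2$. These powers coincide, up to the standard normalization, with the Lagrange multipliers of the downlink SINR constraints, and they satisfy $\lambda_1+\lambda_2=P_C$ since the total dual power equals the minimum downlink power. With the MMSE receive beamformer the uplink SINR of user $i$ equals $\lambda_i\,\qh_i^\dag\big(\sigma^2\qI+\lambda_j\qh_j\qh_j^\dag\big)^{-1}\qh_i$ for $j\neq i$, and must meet its target. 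Applying the Sherman--Morrison identity $\big(\sigma^2\qI+\lambda_j\qh_j\qh_j^\dag\big)^{-1}=\sigma^{-2}\big(\qI-\lambda_j\qh_j\qh_j^\dag/(\sigma^2+\lambda_j\|\qh_j\|^2)\big)$ and substituting $|\qh_1^\dag\qh_2|^2=\zeta^2\|\qh_1\|^2\|\qh_2\|^2$ collapses each quadratic form to a scalar and yields precisely the first two equalities of \eqref{eqn:lambda:c}, while $\lambda_1+\lambda_2=P_C$ supplies the third.

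Finally I would show that the system pins down $\gamma_2$ uniquely. Writing the first equation as $\lambda_1=f_1(\lambda_2)$ and the second as $\lambda_2=f_2(\lambda_1)$, each map is strictly increasing and bounded in its argument, because $0\le 1-\zeta^2\le 1$ forces the denominator to grow no faster than the numerator. Substituting $\lambda_2=P_C-\lambda_1$ into the first equation gives $\lambda_1=f_1(P_C-\lambda_1)$, whose right-hand side is strictly decreasing in $\lambda_1$ while the left-hand side increases; under feasibility the two sides cross exactly once in $(0,P_C)$, fixing $(\lambda_1^{\star},\lambda_2^{\star})$, and the second equation then returns a single value of $\gamma_2$. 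The step I expect to be the main obstacle is making the duality argument rigorous for this asymmetric ratio objective---in particular verifying that the downlink multipliers are exactly the dual uplink powers and that the minimum-power optimum has both constraints simultaneously active---after which the Sherman--Morrison algebra and the monotone fixed-point uniqueness argument are routine.
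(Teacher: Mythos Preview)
Your approach is sound and aligns with the paper's intent: the paper does not prove the lemma inline but simply cites Appendix~A of \cite{Zheng_CCRN_FD_13}, and immediately after the lemma it invokes uplink--downlink duality (citing \cite{boche_algo}) to recover the optimal beamformers, which is exactly the machinery you propose. Your reduction to a minimum-power problem, application of duality, Sherman--Morrison simplification, and monotonicity-based uniqueness argument are the standard ingredients, and the active-constraint argument mirrors the one the paper uses before Proposition~2.
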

\begin{proof}
See Appendix A in \cite{Zheng_CCRN_FD_13}.
\end{proof}

 {  Given $\rho$, using Lemma \ref{lemma:gam2}, the dual variables ($\lambda_1,\lambda_2$) of (\ref{eqn:prob:rate:max:AF:v1}) are identified.
 Then the  optimal $\qw_s, \qw_p$ can be expressed as
\bea \left\{
\begin{array}{ll}
  \qw_1& = \sqrt{p_1}\frac{ \left(\sigma^2\qI + \lambda_2\qh_2\qh_2^\dag\right)^{-1}\qh_1}{\|\left(\sigma^2\qI + \lambda_2\qh_2\qh_2^\dag\right)^{-1}\qh_1\|},\\
     \qw_2& = \sqrt{p_2}\frac{ \left(\sigma^2\qI + \lambda_1\qh_1\qh_1^\dag\right)^{-1}\qh_2}{\| \left(\sigma^2\qI + \lambda_1\qh_1\qh_1^\dag\right)^{-1}\qh_2\|},\\
     \end{array}\right.
\eea
where the downlink power $p_1, p_2$ can be found using the
uplink-downlink duality \cite{boche_algo}.} Then the optimal
solution to
 (\ref{eqn:prob:rate:max:AF:v1}) can be derived by performing 1-D optimization of $\rho$.
To efficiently find the optimal $\rho$,  we characterize  its feasible range in Appendix \ref{app:feas_pow_opt}.

 \subsection{Closed-form ZF Solutions}
  In order to gain more insight into the system parameters, we study the ZF solution
which allows a closed-form solution.
 According to the  ZF criterion, there should be no interference between the primary and the
secondary transmission, which requires that  $\qh_s^\dag\qw_p=
\qh_{sp}^\dag \qw_s=0$. The ZF solution to $\qw_s$ has been given in (\ref{eqn:DPC:ZF:ws})
and similarly, the ZF solution to $\qw_p$ can be derived
as
  \be
    \qw_{p,ZF} = \sqrt{q_p}\frac{ \left(\qI - \frac{\qh_{s}\qh_{s}^\dag}{\|\qh_{s}\|^2}
\right) \qh_{sp}}{\| \left(\qI - \frac{\qh_{s}\qh_{s}^\dag}{\|\qh_{s}\|^2}\right)
    \qh_{sp}\|},
 \ee
 with the resulting channel gain to the PU being
$|\qw_{p,ZF}^\dag\qh_{sp}|^2 =q_p\|\qh_{sp}\|^2(1-\delta^2)$.

Therefore problem (\ref{eqn:prob:rate:max:AF:v1}) reduces to
  \bea\label{eqn:prob:rate:max:AF:ZF:v0}
      \max_{q_p, q_s,\rho} && q_s \\
    \mbox{s.t.}
   &&  \frac{  q_p\|\qh_{sp}\|^2(1-\delta^2)}    {\tilde N_0}
    \ge \gamma_p^{''},  \notag\label{eqn:c1}\\
&&  q_s +  q_p \le   2P_{s0}+ \eta(1-\rho)(2 P_p\|\qg\|^2+ N_0),\notag\\
&& 0\le \rho\le 1, q_p\ge 0, q_s\ge 0. \notag
 \eea

The optimal $\rho_{zf}^*$ is given by
 \bea\label{eqn:rho:zf:ps}
    \rho_{zf}^* =\left[\frac{\frac{\sqrt{\gamma_{p}^{'}  N_C \tilde N_0}}{\sqrt{\|\qh_{sp}\|^2(1-\delta^2)}}\sqrt{   \gamma_{p}^{'}  + \frac{  (\|\qg\|^2-\gamma_{p}^{'} N_0)      }{\eta (2P_p\|\qg\|^2+ N_0)}} + \gamma_p' N_C}{\|\qg\|^2-\gamma_{0}^{'}
    N_C}\right]_0^1.
\eea
Derivation of the solution and its feasible range are given in Appendix \ref{app:opt_zf}.

We can draw some insights from (\ref{eqn:rho:zf:ps}) on {\bl{
$\rho_{zf}^*$}}:
 \begin{itemize}
    \item It increases with $\gamma_p'=\frac{2^{2 r_p}-1}{2P_p}-\frac{|h_p|^2}{\tilde N_0}$, or  decreases with $P_p$ and $|h_p|^2$,
    which means when the primary channel is good or power is abundant, PU rate is easy to satisfy, and thus
    the PT can transfer more energy to the ST.
    \item It decreases with $\eta$ and $\|\qh_{sp}\|^2$, which means if the efficiency of energy transfer is low or the ST-PU channel is weak, more
    received signal is used for information decoding. This is different from the ideal cooperation case where {\bl the} primary signal is non-causally known at the ST.
 \end{itemize}

  To illustrate the solutions, the achievable SU rates against the power splitting parameter $\rho$ for the optimal solution and ZF solution are compared in Fig. \ref{fig:cu:rate:rho} for a specific channel realization
 $N=3, |h_p|^2=0.0127, \qg = [   0.8113 - 1.5579i ~   0.4228 - 0.4039i~  -0.9060 + 0.1513i]^T, \qh_s = [   0.6664 + 0.2165i~   0.0663 - 0.8290i~  -0.7936 -
 0.6795i]^T$ and $\qh_{sp} = [  -0.4623 - 0.6364i   -0.8693 - 0.2020i  -0.1916 - 0.3270i]^T$.
 The primary power is set to $P_p=10$ dB and the ST's own power is $P_{s0}=0$ dB. The PU's target rate is  2.6 bps/Hz. All noise variance is normalized to one. It is clearly seen that the feasible
 range of the optimal solution includes that of the ZF solution as a subset. The optimal  SU rate is higher than  double of the ZF SU rate.

 \begin{figure}[]
        \centering
                 \includegraphics[width=3.4in]{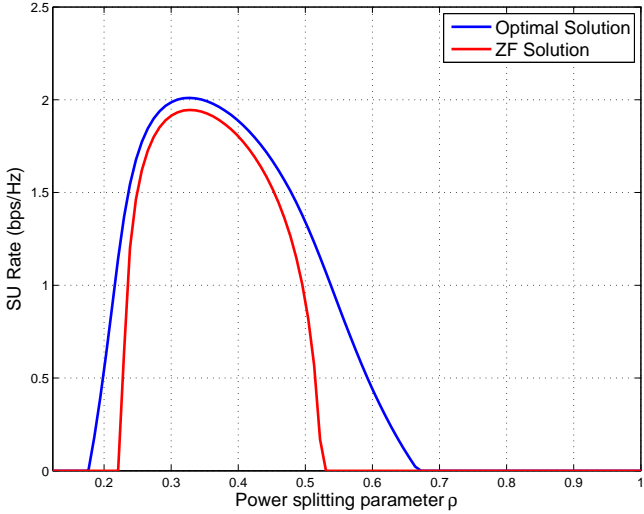}
        \caption{SU rate vs $\rho$ for power splitting. }
        \label{fig:cu:rate:rho}
\end{figure}

\section{Time Splitting Cooperation -- System Model and Optimization}

\subsection{System Model and Problem Setting}
 \begin{figure}[]
        \centering
                 \includegraphics[width=0.5\textwidth]{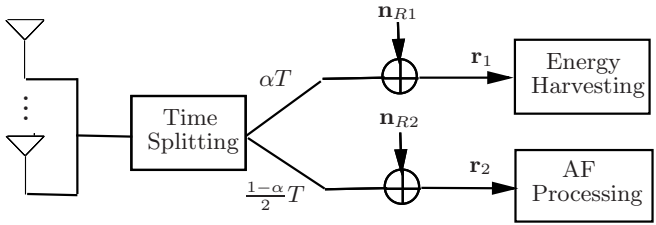}
        \caption{The time splitting EH technique  at the ST. }
        \label{fig:TS}
\end{figure}
 In this section, we study the optimization of a three-phase time-splitting cooperation protocol
where the time-splitting EH is illustrated in Fig. \ref{fig:TS}.
 The PT first uses a dedicated time slot with a duration of $\alpha ~(0\le \alpha\le 1)$ to transfer energy to the ST.
In the remaining two equal-time phases with duration
  of  $\frac{1-\alpha}{2}$, the PT transmits data to the ST then the ST
forwards the primary signal to the PU
 and serves its own SU. {  The PT can adjust its transmit power in the two phases as long as it does not exceed the peak power constraint $P_{max}$.
 The signal model is described below.}

{  In Phase I, the PT sends  signal $s_{p1}$ with average power
$P_{p1}$ to  both the ST for energy harvesting and the PU for
information decoding.} The received signal at the ST is
 \be
 \qr_1=  \qg  s_{p1} + \qn_{R1},\ee
 where    $\qn_{R1}  \in \mathcal{CN}(\qzero,\tilde N_0 \qI)$ is the   AWGN received at the ST.
The amount of the harvested energy is
 \be
    E_{EH} =  \alpha \eta (P_{p1} \|\qg\|^2  + N_0),
 \ee
 where $\eta$ is the efficiency of EH.

 {
 The PU receives
\be y_{p1}=    h_p s_{p1} + n_{p1},\ee where $ n_{p1}\in
\mathcal{CN}(0,\tilde
 N_0)$
 and achieves a rate of
    \be
        R_{p1} = \alpha  \log_2 \left(1+ \frac{P_{p1} |h_p|^2}{\tilde
        N_0}\right).
    \ee
}

 In Phase II, {  the PT sends signals $s_{p2}$ (${\tt E}[|s_{p2}|^2]=1$)  with average power {$P_{p2}$} to   the
 ST,}
  then the received signals at the PU and the ST are, respectively,
 \be y_{p2}=  \sqrt{P_{p2}} h_p s_{p2} + n_{p2},~~ \mbox{and}~~\qr_2=  \sqrt{P_{p2}}\qg  s_{p2} + \qn_{R2},\ee
 where  $ n_{p2}\in \mathcal{CN}(0,\tilde N_0)$ and  $\qn_{R2} \in \mathcal{CN}(\qzero, \tilde N_0\qI)$ are the combined noise received at the PU and the ST,
 respectively.
 The ST adopts the same strategy as the power splitting protocol to process the received primary signal, i.e., it  applies first an MRC receiver $\qg$
 then forwards it using a new beamforming vector $\qw_p$.

In Phase III, the ST superimposes the processed primary signal with its own data $s_s$   using the cognitive beamforming vector $\qw_s$, then
transmits it to both the PU and the SU. In this phase, the PT remains idle. The ST's transmit signal is written as
  \be
      \qt = \qw_s s_s+  \sqrt{P_{p2}}\qw_p\|\qg\|^2   s_{p2} + \qw_p\qg^\dag \qn_{R2},
 \ee
 with average power
  \be
    p_R = {\tt E} [\|\qt\|^2] =     \|\qw_s\|^2 +   \| \qw_p\|^2 \left( P_{p2} \|\qg\|^4
+  \|\qg \|^2 \tilde N_0\right).
     \ee
 The received signal at the SU is
 \bea
   y_s &=&\qh_{s}^\dag\qt + n_s\
    =    \qh_{s}^\dag\qw_s s_s \notag\\ &&+   \sqrt{P_{p2}}\qh_{s}^\dag \qw_p\|\qg\|^2  s_{p2}
+ \ \qh_{s}^\dag\qw_p\qg^\dag\qn_{R2}   +
    n_{s},
\eea where $n_s \in \mathcal{CN}(\qzero,\tilde N_0)$ is the combined  noise received at the SU.
 The received  SINR  at SU   is then expressed as
 \bea
    \Gamma_s    &=& \frac{|\qh_{s}^\dag\qw_s|^2}{    P_{p2}   | \qh_{s}^\dag\qw_p|^2\|\qg\|^4  +
     |\qh_{s}^\dag\qw_p|^2\|\qg\|^2\tilde N_0 +  \tilde N_0 },
\eea and the achievable rate is $R_s =
\frac{1-\alpha}{2}\log_2(1+\Gamma_s)$. The received signal at the PU
is
 \bea
     y_{p3} &=&\qh_{sp}^\dag\qt+n_{p3} =  \qh_{sp}^\dag  \qw_s s_s+  \sqrt{P_{p2}}\qh_{sp}^\dag \qw_p\|\qg\|^2 s_{p2}  \notag\\
    &&+    \qh_{sp}^\dag \qw_p\qg^\dag\qn_{R2}    +   n_{p2}.
\eea

 Applying   MRC  to $y_{p2}$ and $y_{p3}$, the received SINR of the PU is the sum  of two channel uses,
 and {  considering $R_{p1}$ in the first phase}, the total  PU rate is
 \bea
    R_p &= &\alpha  \log_2 \left(1+ \frac{P_{p1} |h_p|^2}{\tilde
        N_0}\right) + \frac{ 1-\alpha  }{2}\log_2\Big(1+\frac{P_{p2} |{h}_p|^2}{\tilde N_0}
    \notag \\&&+  \frac{  P_{p2}  | \qh_{sp}^\dag\qw_p|^2\|\qg\|^4}  { |\qh_{sp}^\dag\qw_s|^2+    |\qh_{sp}^\dag \qw_p|^2\|\qg\|^2\tilde N_0   + \tilde N_0}\Big).
 \eea

 The problem of maximizing the SU rate with PU rate and ST power constraint is formulated
 as (\ref{eqn:prob:TS}) at the top of next page,
 \begin{figure*}
 \bea\label{eqn:prob:TS}
    \max_{\alpha, P_{p1},  \qw_s,\qw_p}&& \frac{1-\alpha}{2}\log_2\left(1+\frac{|\qh_{s}^\dag\qw_s|^2}{    P_{p2}   | \qh_{s}^\dag\qw_p|^2\|\qg\|^4  +
     |\qh_{s}^\dag\qw_p|^2\|\qg\|^2\tilde N_0  +  \tilde N_0 }\right)\\
     \mbox{s.t.} && \alpha  \log_2 \left(1+ \frac{P_{p1} |h_p|^2}{\tilde
        N_0}\right) + \frac{ 1-\alpha }{2}\log_2\left(1+\frac{P_{p2} |{h}_p|^2}{\tilde N_0}
    +  \frac{  P_{p2}  | \qh_{sp}^\dag\qw_p|^2\|\qg\|^4}  { |\qh_{sp}^\dag\qw_s|^2+    |\qh_{sp}^\dag \qw_p|^2\|\qg\|^2\tilde N_0    + \tilde N_0}\right)\ge r_p,\notag\\
&&  P_{p1}\le P_{max}, P_{p2} = \max\left(P_{max},2\frac{ P_p - \alpha P_{p1}}{1-\alpha}\right),\notag\\
       && \|\qw_s\|^2 +   \| \qw_p\|^2 \left( P_{p2} \|\qg\|^4  +  \|\qg \|^2 \tilde N_0 \right)    \le 2 \frac{  \alpha \eta  (P_{p1} \|\qg\|^2  + N_0) + P_{s0}}{(1-\alpha)},\notag\\ &&0\le \alpha\le 1. \notag
     \eea
     \end{figure*}
    {  where we have imposed the peak power constraint $P_{max}$ on the
    PT's transmit power in both Phase I and Phase II, to prevent
    extremely high transmit power.}

{  Following a similar procedure to obtain
(\ref{eqn:prob:rate:max:AF:v1})}, problem (\ref{eqn:prob:TS}) can be
written more compactly as
 {\small \bea\label{eqn:prob:TS2}
    \max_{\alpha, P_{p_1},   \qw_s,\qw_p}&& \frac{1-\alpha}{2}\log_2\left(1+\frac{|\qh_{s}^\dag\qw_s|^2}{ |\qh_{s}^\dag\qw_p|^2  +  \tilde N_0 }\right)\\
     \mbox{s.t.} && \frac{  | \qh_{sp}^\dag\qw_p|^2}  { |\qh_{sp}^\dag\qw_s|^2 + \tilde N_0}\ge \frac{\left( P_{p2} \|\qg\|^2  +   \tilde N_0 \right)\gamma_p' }
     { \|\qg\|^2  - \gamma_p'  \tilde N_0 },\notag\\
     && P_{p1}\le P_{max}, P_{p2} = \max\left(P_{max},2\frac{ P_p - \alpha P_{p1}}{1-\alpha}\right),\notag\\
       && \|\qw_s\|^2 +   \| \qw_p\|^2     \le 2\frac{ \alpha \eta (P_{p1} \|\qg\|^2  + N_0)+ P_{s0}}{(1-\alpha)} ,\notag\\
       && 0\le \alpha\le 1,\notag
     \eea}
 {\bl   where {  $\gamma_p'\triangleq \frac{2^{\frac{2\left(r_p - \alpha \log_2 \left(1+ \frac{P_{p1} |h_p|^2}{\tilde
        N_0}\right) \right)}{(1-\alpha)}}-1}{P_{p2}}-\frac{|h_p|^2}{\tilde N_0}$}.
Given $\alpha$ and $P_{p1}$, the optimal $\qw_s, \qw_p$ can be found
using Lemma \ref{lemma:gam2}. Therefore (\ref{eqn:prob:TS2}) can be
  solved by
 performing  2-D search over $(\alpha, P_{p1})$.}

\subsection{ZF Solution}
 Similar to the case of power splitting, we study the ZF solution for time splitting which require that  $\qh_s^\dag\qw_p= \qh_p^\dag \qw_s=0$.
 The simplified problem (\ref{eqn:prob:TS2})  becomes
  {\small\bea\label{eqn:ts:ZF}
    \max_{\alpha, P_{p1},  q_s,q_p}&& \frac{1-\alpha}{2}\log_2\left(1+\frac{q_s\|\qh_{s}\|^2(1-\delta^2)}{    \tilde N_0 }\right)\\
     \mbox{s.t.} &&  \frac{  q_p\|\qh_{sp}\|^2(1-\delta^2) }  {  \tilde
        N_0}  \ge\frac{\left( P_{p2} \|\qg\|^2  +   \tilde N_0\right)\gamma_p' }{ \|\qg\|^2  - \gamma_p'  \tilde N_0},\notag\\
         && P_{p1}\le P_{max}, P_{p2} = \max\left(P_{max},2\frac{ P_p - \alpha P_{p1}}{1-\alpha}\right),\notag\\
       && q_s +   q_p     \le 2\frac{ \alpha \eta (P_{p1} \|\qg\|^2  + N_0)+ P_{s0}}{(1-\alpha)},\notag\\
       && 0\le \alpha\le 1, q_s\ge 0, q_p\ge 0.\notag
     \eea}

 {
  Given $(\alpha, P_{p1})$,  the solution to $(q_s, q_p)$ is easily  derived. However, this leads to a complicated objective function about $(\alpha, P_{p1})$ which does not admit a
closed-form solution, therefore the optimal solution   can be found
by performing 2-D search over $(\alpha, P_{p1})$.}

\section{Simulation Results}
 {\bl In this section} the performance results of the proposed  primary-secondary
cooperation schemes are presented through computer simulation.
 We assume that the ST has $N=4$ transmit antennas. {  We consider a scenario where the distances from the ST to all the other terminals  are 1m,
 while the distance from the PT to the
 PU is 2m, therefore assistance from the ST is usually preferred by the
 PT. The potential application scenarios include wireless sensor networks or the indoor environment where WiFi and ZigBee coexist
 both operating at 2.4 GHz which leads to significant interference \cite{wifi}.   WiFi is the primary system and ZigBee is the secondary system. Zigbee wants to share the
 spectrum occupied by WiFi but it has very limited energy supply or   even no battery.}
 The channel between a transmit-receive antenna pair is modeled as {\bl{$h=(\Delta)^{-\frac{l}{2}}e^{j\omega}$, where $\Delta$}} is the distance, $l$ is the path loss exponent, chosen as $3.5$,
 and $\omega$ is uniformly distributed over $[0,2\pi)$. The variance of   noise components are normalized to unity, i.e., $N_0=N_C=1$.
 The primary {  energy} is set to $P_p=20$ dB,  {  the  peak power is $P_{max} = 30$ dB, and
  the PU rate requirement is $r_p=3$ bps/Hz,  unless otherwise specified.}
  Outage  occurs when the required PU rate is  not supported.
We will evaluate the performance of the proposed schemes including the ideal cooperation
 (labeled as 'Ideal Cooperation'),
 the power splitting scheme
 (labeled as 'Power Splitting EH') and the time splitting scheme
  with both fixed equal power  (labeled as `Time Splitting EH' ) and adaptive power
allocation (labeled as `Time Splitting EH with Power Allocation') during the energy
and information transfer.
  The case of information cooperation only \cite{Zheng_CCRN_13}
   between the primary and secondary systems without energy cooperation is used as the
 benchmark (labeled as 'No Energy Cooperation'). {\bl Unless otherwise
 specified, the results are averaged over 1000 channel
 realizations.}

In Fig. \ref{fig:rate:region}, we plot the rate regions for
different schemes  for   a specific randomly chosen channel
realization $|h_p|^2=0.0002, \qg = [
  -0.9472 - 0.6334i~
  -0.9090 - 1.2266i~
  -1.1855 + 0.3370i~
   0.5345 - 0.1796i]^T, \qh_s = [    -0.9215 - 0.4314i~
   0.2052 - 0.2503i~
   0.3109 - 0.3055i~
   0.3560 + 0.1163i]^T$ and $\qh_{sp}
= [     0.3610 - 0.1248i~
   1.1616 + 0.8211i~
  -0.4350 - 0.2818i~
  -0.4445 + 0.6564i]^T$.  ST's own energy is $P_{s0}=10$ dB. {  The efficiency of energy transfer is assumed to be
  $\eta=0.5$.}
It is seen that the achievable rate regions are greatly enlarged due
to the energy cooperation. {  It is observed that the power
splitting scheme for energy cooperation outperforms the time
splitting scheme. }  Due to the non-causal information transfer, the
ideal information and energy cooperation provides an outer bound for
both practical cooperation schemes.

Next we investigate the impact of the ST's self energy on the
achievable {  average} SU rate in Fig. \ref{fig:cu:rate:cu:power} {
when the efficiency of energy transfer, $\eta$, takes values 0.1,
0.5 and 1. Substantial rate gain is achieved using the proposed
schemes compared with the case without energy cooperation,
especially in low to medium ST energy region even  when the
efficiency is $\eta=0.1$. While in the high energy region and
$\eta=0.1$, the performance of the information only cooperation
scheme is close to that of  the power splitting and time splitting
scheme. This is because the ST has sufficient energy and the
efficiency of energy transfer is low, there is no need to harvest
energy from the primary transmission. It is also observed that the
power splitting scheme achieves higher SU rate than the time
splitting scheme when the efficiency $\eta=0.5$ and 1 while when
$\eta=0.1$, the time splitting scheme outperforms the power
splitting scheme in the low to medium energy region.}

We then compare the PU rate outage performance of different schemes
when the ST energy varies  from 0 to 20 dB in Fig.
\ref{fig:pu:outage:cu:power}.  We assume   the SU rate requirement
is $r_s=3$ bps/Hz. It is first noted that without any
primary-secondary cooperation, the PU experiences rate outage with a
high probability of over  80\% due to the weak primary channel. If
only information cooperation but no energy transfer is allowed, the
outage probability can be reduced only in the high  energy region.
 {  When
  $\eta\ge 0.5$, the  power splitting and time splitting
schemes achieve outage probabilities of below $20\%$ and $35\%$,
respectively, which is a substantial improvement.}

 \begin{figure}[!ht]
        \centering
                 \includegraphics[width=3.5in]{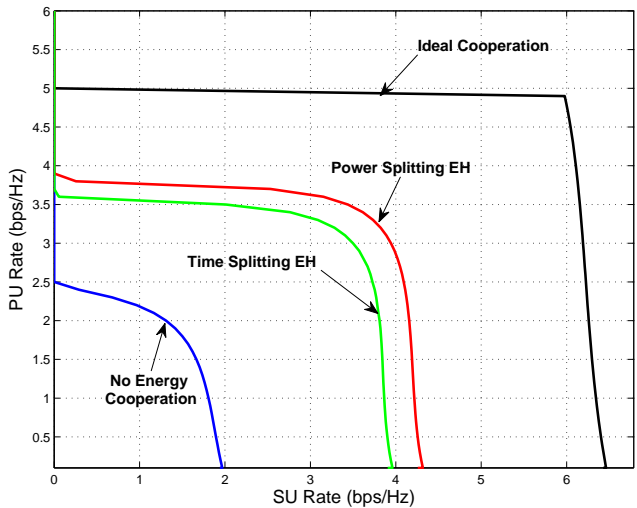}
        \caption{PU-SU rate region, $P_p=20$ dB, $P_{s0}=10$ dB. }
        \label{fig:rate:region}
\end{figure}

 \begin{figure}[!ht]
        \centering
                 \includegraphics[width=3.5in]{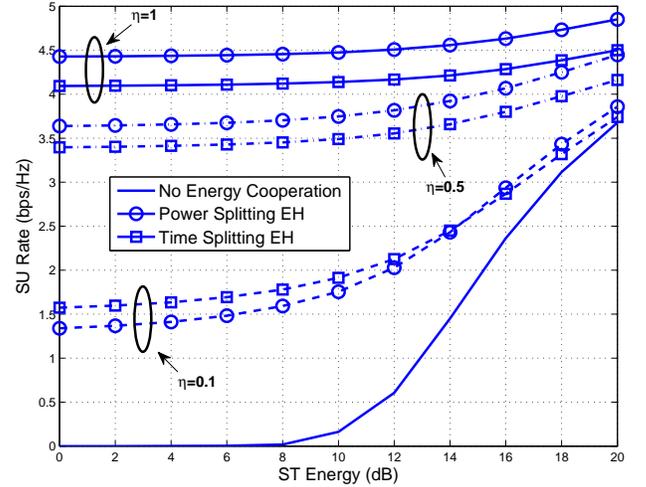}
        \caption{SU rate vs SU power, $r_p=3$ bps/Hz. }
        \label{fig:cu:rate:cu:power}
\end{figure}

 \begin{figure}[!ht]
        \centering
                 \includegraphics[width=3.5in]{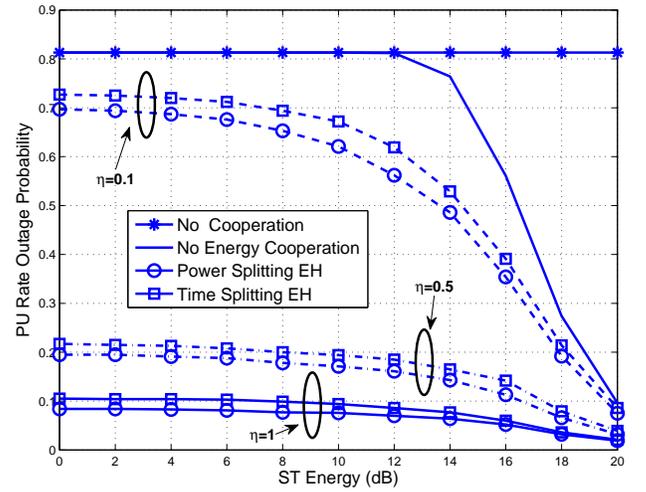}
        \caption{PU rate outage vs SU power, $r_p=3$ bps/Hz   and $r_s=4$ bps/Hz. }
        \label{fig:pu:outage:cu:power}
\end{figure}

\section{Conclusions}
 This paper has investigated energy cooperation between the primary and secondary system in cognitive radio networks,  in addition to existing information
 cooperation. The rationale behind is that the primary system provides spectrum as well energy to the secondary system, and in return the secondary system
 is more willing and able to  assist the primary transmission. This creates more incentives for the primary and secondary systems to cooperate and overall the
 spectrum is better utilized. We have studied three protocols that enables both energy and information cooperation.
 The first one is the ideal cooperation assuming non-causal primary information available at the secondary transmitter; the other two
 protocols employ   practical power splitting and time splitting for energy and information transfer.
 For each scheme, the optimal as well as low-complexity solutions are derived, based on which some insights are drawn on system parameters.
 {  It is found that  the power splitting scheme usually can support a larger rate region  than the time splitting scheme when the efficiency of energy transfer
 is sufficiently high.} Substantial
performance gain has been shown using the proposed
 additional energy cooperation than the existing information cooperation only CR scheme, therefore  energy and information cooperation could be a promising
 solution for the future CR networks.

\section*{Acknowledgement}
The authors thank the anonymous reviewers and the associated editor
for their useful comments which have substantially improve the
quality  of the manuscript. The first and the third authors would
like to thank  David Halls and  Aissa Ikhlef from Toshiba Research
Europe Limited, for their suggestions to improve the clarity of this
manuscript.

\appendices

\section{Proof of Proposition \ref{prop:feas}}\label{app:feas}
The maximum PU rate is achieved when SU rate is zero and $\qw_s=\qzero$. In this case, $q_p=P_{s0} + \beta \eta P_p$ and the optimal $\beta$ that  maximizes
 the  PU rate is given by
 \be
    \beta^* = \arg_{\beta} \max_{0\le \beta\le 1} \left(\sqrt{(1-\beta) P_p}|h_p| + \sqrt{P_{s0} + \beta \eta P_p}\|\qh_{sp}\| \right).
 \ee
 Setting the derivative to zero leads to the unique critical point
 \be
    \bar\beta=\frac{P_p \eta^2 \|\qh_{sp}\|^2 - P_{s0} |h_p|^2}{ P_p
\eta^2 \|\qh_{sp}\|^2 + \eta P_p |h_p|^2}.
 \ee
  Then the optimal $\beta$ is $\beta^*=[\bar\beta]_0^1.$
 Accordingly, the maximum PU rate in (\ref{eqn:Rpmax:DPC}) can be
 derived.
%
%
%
{{
\section{Recasting \eqref{eqn:prob:DPC} to the convex program (\ref{eqt:cov_opt})}\label{app:convex}
We begin by realizing that the optimization problem relates to
$\w_s$ only in quadratic forms and we can multiply a complex phase
to the optimization objective so that the complex phase is offset.
Hence, we replace the optimization objective by
$\re\left(\h_s^{\dagger} \w_s\right)$.
\begin{equation}
 \begin{aligned}
  \max_{\w_s, \w_p, \beta} \hspace{1cm} & \re\left(\h_s^{\dagger} \w_s\right)\\
  \st \hspace{1cm} & |\sqrt{(1-\beta) P_p} |h_p| + \h_{sp}^{\dagger}
  \w_p|^2\notag \\
&\geq \gamma_p \left( |\h_{sp}^{\dagger} \w_s|^2 + \tilde{N}_0 \right),\\
 & \| \w_s\|^2 + \|\w_p\|^2 \leq P_{s0} + \beta \eta P_p,
 \end{aligned}
\end{equation} where $\gamma_p= 2^{R_p}-1$ is the target SINR.
The optimal direction of $\w_p=\sqrt{q_p} \frac{\h_{sp}}{\|\h_{sp}\|} e^{j \arg(h_p)}$ can be determined
and what is left for optimizing is the power of $\w_p$, $q_p$. Thus we have
\begin{equation}
 \begin{aligned}
  \max_{\w_s, q_p, \beta} \hspace{1cm} & \re\left(\h_s^{\dagger} \w_s\right)\\
  \st \hspace{1cm} & (\sqrt{(1-\beta) P_p} |h_p| + \sqrt{q_p} \|\h_{sp}\| )^2\notag \\
&\geq \gamma_p \left( |\h_{sp}^{\dagger} \w_s|^2 + \tilde{N}_0 \right),\\
 & \| \w_s\|^2 + q_p \leq P_{s0} + \beta \eta P_p.
 \end{aligned}
\end{equation}
Now, we let
$\bar{\beta}= \sqrt{1-\beta}$ and
 $\bar{q}_p = \sqrt{q_p}$ and the ranges of the parameters remain the same: $0 \leq \bar{\beta} \leq 1$ and $\bar{q_p} \geq 0$.
The optimization problem becomes
\begin{equation}
 \begin{aligned}
  \max_{\w_s, \bar{q}_p, \bar{\beta}} \hspace{1cm} & \re\left(\h_s^{\dagger} \w_s\right)\\
  \st \hspace{1cm} & ( \bar{\beta} \sqrt{ P_p} |h_p| + \bar{q}_p \|\h_{sp}\| )^2
\geq \gamma_p \left( |\h_{sp}^{\dagger} \w_s|^2 + \tilde{N}_0 \right),\\
 & \| \w_s\|^2 + \bar{q}_p^2 \leq P_{s0} + (1-\bar{\beta}^2) \eta P_p,\\
& 0 \leq \bar{\beta} \leq 1, \; \bar{q_p} \geq 0.
 \end{aligned}
\end{equation}
Define a vector $\bv$:
\begin{equation}
 \bv=\left[ \bar{\beta}, \bar{q}_p\right]^{\tran}
\end{equation}
and the result follows. }

\section{Proof of Proposition \ref{prop:2d}}\label{app:feas2}

    Define $q_s\triangleq \|\qw_s\|^2$. First notice that the PU rate and ST power constraints in (\ref{eqn:prob:DPC}) should be satisfied with
    equalities. So we have
    \be\label{eqn:qs}
        q_s = (P_{s0} + \beta \eta P_p) - q_p.
    \ee
According to Proposition 1 in  \cite{Jorswieck-WSA-11}, the optimal $\qw_s$ can be parametrized as
  \be
  \qw_s=  \sqrt{q_s} \left(\sqrt{\lambda} \frac{\Pi_{\qh_{sp}} \qh_s}{\|\Pi_{\qh_{sp}} \qh_s\|} + \sqrt{1-\lambda} \frac{\Pi_{\qh_{sp}}^\bot \qh_s}{\|\Pi_{\qh_{sp}}^\bot \qh_s\|}\right), 0\le \lambda\le
  1.
  \ee
 We then have $|\qh_{sp}^\dag \qw_s |^2 =  \lambda q_s\|\qh_{sp}\|^2$. Substituting it into the PU rate constraint in (\ref{eqn:prob:DPC}), we can
 solve   $\lambda$ as:
\be\label{eqn:lambda}
    \lambda =  \frac{\frac{|\sqrt{(1-\beta) P_p}|h_p| + \sqrt{q_p}\|\qh_{sp}\| |^2}  {2^{r_p}-1 }-\tilde N_0}{((P_{s0} + \beta \eta P_p) -
    q_p)\|\qh_{sp}\|^2}.
\ee
Substituting (\ref{eqn:qs})--(\ref{eqn:lambda}) into the objective function in  (\ref{eqn:prob:DPC}) gives the formulation
 (\ref{eqn:prob:DPC:2D}).

%
\section{Feasibility {\bl range} in power-splitting cooperation}\label{app:feas_pow}
Define $a_1\triangleq 2 P_p   \|\qg\|^2+   N_0 ,~~b_1\triangleq
2P_{s0},~~c_1\triangleq \|\qh_{sp}\|^2$, then
(\ref{eqn:prob:rate:max:AF:v4}) is equivalent to
      \bea\label{eqn:prob:rate:max:AF:v5}
      \min_{\rho} &&  f_1(\rho)\triangleq \frac{\tilde N_0(a_1+\frac{N_C}{\rho})}{c_1[b_1+a_1\eta(1-\rho)]}+(N_0+\frac{N_C}{\rho})\\
    \mbox{s.t.}&& 0\le \rho\le 1.\notag
 \eea
The first-order derivative of $f_1(\rho)$ is given by \bea
   &&\frac{\partial  f_1(\rho)}{\partial \rho} \\
          &=&-\frac{\tilde N_0N_C}{c_1[b_1+a_1\eta(1-\rho)]\rho^2}+
  \frac{\tilde N_0(a_1\rho+N_C)a_1\eta}{c_1[b_1+a_1\eta(1-\rho)]^2\rho}-\frac{N_C}{\rho^2},\notag
\eea and setting it to zero leads to
 \be f_2(\rho)\triangleq A_1\rho^2+B_1\rho-C_1=0,\ee
 where for convenience, we have  defined $A_1\triangleq a_1^2\eta \frac{\tilde N_0}{N_C}-(a_1\eta)^2c_1, B_1\triangleq 2a_1\eta N_0+2(b_1+a_1\eta)a_1\eta c_1, C_1\triangleq
N_0(b_1+a_1\eta)+c_1(b_1+a_1\eta)^2$,
%

 Observe that $f_2(\rho)$ has the same sign as $\frac{\partial  f(\rho)}{\partial \rho}$ in order to find the optimal $\rho^*$. Next we discuss the roots of $f_2(\rho)$.
 Because
    \bea
    && B_1^2+4A_1C_1 = (2a_1\eta N_0+2(b_1+a_1\eta)a_1\eta c_1)^2 \notag \\
     &&+ 4 \left(a_1^2\eta \frac{\tilde N_0}{N_C}-(a_1\eta)^2c_1\right) (N_0(b_1+a_1\eta)\notag  +c_1(b_1+a_1\eta)^2)\notag\\
      &>&  (2a_1\eta N_0+2(b_1+a_1\eta)a_1\eta c_1)^2 \notag \\ && - 4 (a_1\eta)^2c_1 (N_0(b_1+a_1\eta)+c_1(b_1+a_1\eta)^2)\notag\\
      &=& 4 a_1^2\eta^2 \left(N_0 ^2  + c_1 N_0 (b_1+ a_1\eta)\right)>0,
     \eea
 there are always two distinct real roots
 $\rho_1=\frac{-B_1+ \sqrt{B_1^2+4AC}}{2A_1}$ and $\rho_2=\frac{-B_1-\sqrt{B_1^2+4AC}}{2A_1}$. Depending on the sign of $A_1$, there are three possible cases for the optimal $\rho$:
 \begin{itemize}
    \item[i)] $A_1>0$ or $\frac{\tilde N_0}{N_C}> \eta\|\qh_{sp}\|^2$. Because $C_1>0$, $\rho_1>0, \rho_2<0$, therefore $\rho^* = \min( \rho_1 ,1)$.
    \item[ii)] $A_1<0$ or $\frac{\tilde N_0}{N_C}< \eta\|\qh_{sp}\|^2$. In this case, $\rho_2>\rho_1>0$.  Due to the fact that  $A_1+C_1 >0$ or $\frac{-C_1}{A_1}=  \rho_1\rho_2>1$, we know that $\rho_2>1$ and it cannot be the optimal solution.
    Therefore, $\rho^* = \min( \rho_1,1)$.
    \item[iii)] $A_1=0$. In this case, $\rho^*= \min\left(\frac{C_1}{B_1},1\right)=\min\left(\frac{P_{s0}}{2(2P_p   \|\qg\|^2+   N_0)\eta} + \frac{1}{2},1\right).$
 \end{itemize}
 After finding $\rho^*$, the maximum achievable PU rate can be calculated and compared with the PU rate requirement to check the feasibility.

\section{Feasibility Range of $\rho$ in power splitting cooperation}\label{app:feas_pow_opt}

  A given $\rho$ can result in a feasible solution only if the ST can satisfy the PU's rate requirement even without serving the SU. In this extreme
  case, $\qw_s=0$ and $\qw_p= \frac{\sqrt{2P_{s0}+P_{EH}}\qh_{sp}}{\|\qh_{sp}\|}$. The PU's rate constraint then amounts to
  \bea\label{eqn:fea}
   (2P_{s0}+ \eta(1-\rho)(2P_p\|\qg\|^2+ N_0))\|\qh_{sp}\|^2  \notag \\
   \ge  \frac{(2P_p \rho \|\qg\|^2 + \rho N_0  + N_C)\gamma_{p}^{'} \tilde N_0}{ \rho\|\qg\|^2-\gamma_{p}^{'}(\rho N_0 +
   N_C)}.
   \eea

 For convenience, define $a\triangleq \eta(2P_p\|\qg\|^2+ N_0), b\triangleq   \|\qg\|^2-\gamma_{p}^{'} N_0, c \triangleq \frac{\gamma_{p}^{'}
\tilde N_0}{\|\qh_{sp}\|^2}$ and  $A   \triangleq ab, B \triangleq
-a \gamma_p' N_C -b(2P_{s0}+a)+ac, C  \triangleq
N_C(\gamma_p'(2P_{s0}+a) + c)$.
 Rearranging the above inequality (\ref{eqn:fea}) leads to
   \bea
  2P_{s0} + a - a\rho - c \frac{\frac{a\rho}{\eta} + N_C}{b\rho-\gamma_p' N_C}\ge 0,
 \eea
 and
  \be\label{eqn:fea:opt:PS}
    f_{PS}(\rho) \triangleq A\rho^2 + B\rho + C\le 0.
  \ee
 We then discuss possible cases below.

 \begin{itemize}
    \item $A=0$ or $\|\qg\|^2 =  N_0(\frac{2^{2 r_p}-1}{2P_p}-\frac{|h_p|^2}{\tilde N_0})$. Under this condition, there are two possibilities:
        \begin{itemize}
            \item $c\ge \gamma'_p \eta N_C$ or  $  \tilde N_0 \ge  \eta N_C \|\qh_{sp}\|^2$. In this case, $0\le \rho \le 1$.
            \item $c <\gamma'_p \eta N_C$ or  $  \tilde N_0 < \eta N_C \|\qh_{sp}\|^2$, then
            $0\le \rho<\left[\frac{2 P_{s0} + a  + \tilde N_0 \|\qh_{sp}\|^2}{a - \frac{\tilde N_0 \|\qh_{sp}\|^2a}{\eta N_C}}\right]_0^1.$
        \end{itemize}
   When $A\ne 0$, there are two possible cases:
   \item
    If the discriminant is nonnegative,  suppose the two real roots of the quadratic equation $f_{PS}=0$ are given by $\bar\rho_{\min}$ and
    $\bar\rho_{\max}$, and $\bar\rho_{\min}\le \bar\rho_{\max}$. We have the following discussion.
       \begin{itemize}
    \item $A>0$. The feasible $\rho$ should satisfy $[\bar\rho_{\min}]_0^1 \le \rho\le [\bar\rho_{\max}]_0^1$.
    \item $A<0$. Because $C>0$, we know that $\bar\rho_{\min}<0$ and $\bar\rho_{\max}>0$, thus we have $[\bar\rho_{\max}]_0^1 \le \rho \le 1$.
        \end{itemize}
    \item
    If the discriminant is negative, there is no feasible $\rho$.
 \end{itemize}

\section{Optimality of $\rho_{zf}^*$ in power splitting cooperation}\label{app:opt_zf}
Because the first two constraints in (\ref{eqn:prob:rate:max:AF:ZF:v0})
should hold with equality, we can derive
 \bea
    q_s(\rho)& =& 2P_{s0} + a - a\rho - c_{zf} \frac{a\rho + N_C}{b\rho-\gamma_p' N_C}\\
            &=& 2 P_{s0} + a - a\rho -  \frac{ac}{b} - \frac{(\gamma_p'\frac{a}{b}+1) N_C c_{zf}}{b\rho-\gamma_p' N_C},\notag \eea
            where $c_{zf} =  \frac{\gamma_{p}^{'} \tilde N_0}{\|\qh_{sp}\|^2(1-\delta^2)}$.
Maximizing $q_s$ is equivalent to \be
    \min_{0<\rho<1} a\rho + \frac{(\gamma_p'\frac{a}{b}+1) N_C c_{zf}}{b\rho-\gamma_p' N_C}
\ee whose minimum is achieved by \eqref{eqn:rho:zf:ps}.
%
 To compare with the optimal solution, we also study the feasible range of $\rho$ for the ZF scheme.
 A given $\rho$ is feasible requires that $q_s(\rho)\ge 0$ or
   \be\label{eqn:fea:opt:PS}
    f_{PS-ZF}(\rho) \triangleq A\rho^2 + B\rho +  C_{zf} \le 0,
  \ee
  where $C_{zf}  \triangleq N_C(\gamma_p'(2 P_{s0}+a) +c_{zf})$.
  Similar results as the optimal case (\ref{eqn:fea:opt:PS}) can be obtained, except replacing $C$ with $C_{zf}$. Since  $C_{zf}>C$ due to
  the fact that  $c_{zf}>c$, the optimal scheme has a larger feasibility region for $\rho$ than the ZF solution.

 \end{document}